\documentclass[%
reprint, 
superscriptaddress, 
amsmath, amssymb, 
aps, 
pra, 
]{revtex4-2}

\usepackage{graphicx}
\usepackage{dcolumn}
\usepackage{bm}
\usepackage{braket}
\usepackage{amsthm}
\usepackage{dsfont}
\usepackage{hyperref}

\newtheorem{Thm}{Theorem}
\newtheorem{Lem}[Thm]{Lemma}
\newtheorem{Prop}[Thm]{Proposition}

\theoremstyle{definition}

\begin{document}

\title{Unconditional quantum advantage from a 
two-round CHSH problem in one dimension}

\author{Yonghae Lee}
 \email{yonghaelee@kangwon.ac.kr}
 \affiliation{Department of Liberal Studies, Kangwon National University, Samcheok 25913, Korea}
\author{Jeonghyeon Shin}
 \email{jeonghyeon.shin@kist.re.kr}
 \affiliation{Center for Quantum Technology, Korea Institute of Science and Technology (KIST), Seoul 02792, Korea}
 \affiliation{Department of Mathematics and Research Institute for Basic Sciences, Kyung Hee University, Seoul 02447, Korea}
\author{Soojoon Lee}%
 \email{level@khu.ac.kr}
\affiliation{Department of Mathematics and Research Institute for Basic Sciences, Kyung Hee University, Seoul 02447, Korea}%
\affiliation{School of Computational Sciences, Korea Institute for Advanced Study, Seoul 02455, Korea}%

\date{\today}

\begin{abstract}
We introduce a relation problem constructed from the Clauser--Horne--Shimony--Holt (CHSH) game, which we call the two-round one-dimensional CHSH problem.
Its two-round structure ensures that the CHSH questions are supplied only after the relevant Pauli-frame data have been fixed, thereby ruling out a simple classical strategy that solves the corresponding problem perfectly when all inputs are supplied simultaneously.
We construct a quantum circuit on $2N$ qubits that uses only adjacent two-qubit gates, has operational depth at most eight, and achieves the optimal quantum success probability of CHSH, which is strictly smaller than one.
We prove that, for every fixed $0\leq\delta<(\sqrt{2}-1)/4$, any randomized classical circuit with fixed wiring and bounded gate fan-in that achieves an average success probability of at least $(2+\sqrt{2})/4-\delta$ requires depth $\Omega(\log N)$ after the questions of the second round are supplied.
This yields an unconditional separation even though the quantum circuit is restricted to a one-dimensional geometry, whereas the classical circuit has no geometric locality restriction.
The result shows that perfect quantum success is not necessary for unconditional quantum advantage with shallow circuits.
\end{abstract}

\maketitle


\section{Introduction} \label{sec:Introduction}

The demonstration of quantum advantage is a central objective of quantum information science.
A rigorous separation identifies computational tasks for which quantum resources provide performance that cannot be reproduced within a specified classical model.
Quantum algorithms and sampling experiments have provided strong evidence for such an advantage~\cite{Shor1994, Bremner2011, Arute2019}.
However, the corresponding statements about classical hardness generally depend on unresolved assumptions in computational complexity.

Unconditional separations play a complementary role by establishing a provable advantage under explicit restrictions on the computational models.
The term \emph{unconditional} means that the classical lower bound does not rely on any unproved assumption about computational hardness.
It does not mean that the comparison places no restrictions on classical computation.
In this work, the relevant restrictions concern gate fan-in, circuit depth, fixed circuit wiring, and the temporal order in which inputs become available.
Shallow circuits provide a natural setting for such a comparison because their depth and gate fan-in limit the propagation of information, whereas quantum entanglement can produce correlations inaccessible to classical circuits under the same depth restriction~\cite{Bravyi2018, LeGall2019, Bravyi2020}.

The CHSH game is a basic example of quantum nonlocality in which the optimal classical and quantum winning probabilities are $3/4$ and $(2+\sqrt{2})/4$, respectively~\cite{CHSH1969, Tsirelson1980, Cleve2004, Brunner2014}.
In this work, we introduce the two-round one-dimensional CHSH (2R1D-CHSH) problem, a relation problem constructed from the CHSH game.

The division into two rounds is an essential part of the problem.
If the CHSH questions were available when the transcript of the first round was produced, a classical circuit of depth one could encode those questions in the Pauli-frame data and satisfy the resulting relation with probability one.
The temporal ordering rules out this strategy by fixing the Pauli frame before the questions are supplied.
For any selected pair whose answers cannot depend on the opposite questions, the resulting strategy is then subject to the classical CHSH bound.
A separate lightcone argument shows that only a small fraction of selected pairs can violate this independence condition when the classical circuit has small depth and bounded fan-in.

Previous two-round tasks for shallow quantum advantage use interaction to divide the simulation of a Clifford computation and thereby obtain hardness against stronger classical circuit classes~\cite{Grier2020, Grier2021}.
More recently, two-round interactive problems with perfect completeness have been used to establish a depth hierarchy within shallow quantum circuits~\cite{Hsieh2026}.
In the 2R1D-CHSH problem, the two rounds instead enforce the causal independence needed to retain the classical and quantum values of CHSH.
Accordingly, the resulting separation is expressed through their average success probabilities rather than perfect success.

Our main result establishes an unconditional separation in circuit depth for the 2R1D-CHSH problem.
We construct a quantum circuit on $2N$ qubits that uses gates acting only on adjacent qubits, has operational depth at most eight, and achieves the optimal quantum success probability of the CHSH game.
We also derive an upper bound on the average success probability of randomized classical circuits with bounded fan-in and show that every classical circuit family of fixed depth is asymptotically limited by the classical value of the CHSH game.
Consequently, any such classical circuit that approximates the quantum performance within a fixed error smaller than the gap between the classical and quantum values must have depth $\Omega(\log N)$.
This separation holds even though the quantum circuit is restricted to a one-dimensional geometry, whereas the classical circuit may have arbitrary size and fan-out and is not subject to any geometric locality restriction.

The present result differs from previous shallow circuit separations in two main respects.
First, the closely related one-dimensional constructions based on the Magic Square and Magic Pentagram games use perfect quantum strategies and require $4N$ and $6N$ qubits, respectively~\cite{Bravyi2020, Han2022}.
The 2R1D-CHSH construction instead begins with a game whose quantum value is strictly smaller than one and uses only $2N$ qubits.
Second, CHSH has previously appeared as an illustrative example of an advantage enabled by two stages~\cite{Caha2024}, and nonlocal games including CHSH can be converted into interactive tests of quantum advantage using cryptographic assumptions~\cite{Kalai2023}.
These results do not give a direct and unconditional circuit depth separation for a relation problem constructed from CHSH.

The present construction gives an explicit relation problem derived directly from CHSH, together with a quantitative upper bound on the average success probability of classical circuits.
Unlike constructions that rely on perfect quantum strategies, it converts the strict but probabilistic advantage of CHSH into an asymptotic separation in circuit depth.
Thus, quantum pseudo-telepathy~\cite{Brassard2005} is not required for this construction.
The essential mechanism is the temporal ordering that prevents the questions from influencing the previously reported Pauli-frame data, combined with the limited propagation of those questions through a shallow classical circuit.

The remainder of this paper is organized as follows.
In Sec.~\ref{sec:CHSHg}, we state the CHSH conventions and the properties of its Pauli-frame generalization needed below.
In Sec.~\ref{sec:2R1D}, we formulate the 2R1D-CHSH problem and specify its input distribution, temporal structure, and the condition defining a valid output.
In Sec.~\ref{sec:Quantum}, we construct the constant-depth quantum circuit and determine its average success probability.
In Sec.~\ref{sec:Classical}, we establish an upper bound on the average success probability of randomized classical circuits with bounded fan-in.
In Sec.~\ref{sec:DepthSeparation}, we combine the quantum and classical results to obtain the circuit depth separation.
In Sec.~\ref{sec:Remarks}, we explain why the two-round structure is required and compare the qubit requirements with those of previous constructions.
Finally, Sec.~\ref{sec:Conclusion} concludes the paper and discusses directions for further research.

\section{The CHSH Game and Its Pauli-Frame Generalization} \label{sec:CHSHg}

In this section, we specify the CHSH conventions and the Pauli-frame properties used in the construction and analysis below.
For a nonlocal game $G$, we write $\omega_{\mathrm{c}}(G)$ and $\omega_{\mathrm{q}}(G)$ for its optimal classical and quantum winning probabilities, respectively.

In the CHSH game, a verifier sends independent and uniformly distributed bits $x,y\in\{0,1\}$ to Alice and Bob, respectively.
They return bits $a, b\in\{0,1\}$ without communication after receiving their questions and win when
\begin{equation} \label{eq:CHSH-winning-condition}
a \oplus b = xy, 
\end{equation}
where $ \oplus $ denotes addition modulo two.
Classical strategies may use shared randomness, whereas quantum strategies may also use a shared entangled state prepared before the questions are chosen.
The optimal winning probabilities are~\cite{CHSH1969, Tsirelson1980, Cleve2004}
\begin{equation} \label{eq:CHSH-values}
\omega_{\mathrm{c}}(\mathrm{CHSH}) = \frac{3}{4}, \quad
\omega_{\mathrm{q}}(\mathrm{CHSH}) = \frac{2+\sqrt{2}}{4}.
\end{equation}

An optimal quantum strategy uses the Bell state
\begin{equation} \label{eq:Phi-plus}
\ket{\Phi^+} = \frac{\ket{00}+\ket{11}}{\sqrt{2}}.
\end{equation}
Alice and Bob measure $\mathcal{A}_x$ and $\mathcal{B}_y$, respectively.
Alice's observables are
\begin{equation} \label{eq:CHSH-Alice-observables}
\mathcal{A}_0 = Z, \quad
\mathcal{A}_1 = X, 
\end{equation}
and Bob's observables are
\begin{equation} \label{eq:CHSH-Bob-observables}
\mathcal{B}_0 = \frac{Z+X}{\sqrt{2}}, \quad
\mathcal{B}_1 = \frac{Z-X}{\sqrt{2}}.
\end{equation}
The symbols $X$ and $Z$ denote the Pauli matrices.
The measurement eigenvalues $+1$ and $-1$ are reported as the bits $0$ and $1$, respectively.

A Pauli frame records the Pauli operators relating a physical state to a reference state, so that their effects can be accounted for without applying a physical correction~\cite{Knill2005, Chamberland2018}.
For $\sigma,\tau\in\{0,1\}$, we use the convention~\cite{Bennett1993}
\begin{equation} \label{eq:framed-Bell-state}
\ket{\Phi_{\sigma,\tau}} = \left( Z^\tau X^\sigma \otimes \mathbf{1} \right)\ket{\Phi^+}, 
\end{equation}
where $\mathbf{1}$ is the identity on Bob's qubit.
Thus, $\sigma$ is the $X$-frame bit and $\tau$ is the $Z$-frame bit.

For a fixed frame $(\sigma,\tau)$, the Pauli-frame CHSH (PF-CHSH) game has the same questions, distribution, answer alphabets, and communication restriction as CHSH, but its winning condition is
\begin{equation} \label{eq:PF-CHSH-bit-condition}
a \oplus b = xy \oplus (1-x)\sigma \oplus x\tau.
\end{equation}
The frame is fixed and known to the players before they choose their strategy.
We suppress its values in the notation $\mathrm{PF\text{-}CHSH}$.
The local relabeling $a' = a \oplus (1-x)\sigma \oplus x\tau$ is invertible and converts this condition into $a' \oplus b = xy$.
Hence, for every fixed frame,
\begin{equation} \label{eq:PF-CHSH-values}
\begin{aligned}
\omega_{\mathrm{c}}(\mathrm{PF\text{-}CHSH}) & = \omega_{\mathrm{c}}(\mathrm{CHSH}) = \frac{3}{4}, \\
\omega_{\mathrm{q}}(\mathrm{PF\text{-}CHSH}) & = \omega_{\mathrm{q}}(\mathrm{CHSH}) = \frac{2+\sqrt{2}}{4}.
\end{aligned}
\end{equation}

The quantum value is attained by measurements of the same observables $\mathcal{A}_x$ and $\mathcal{B}_y$ on
$\ket{\Phi_{\sigma,\tau}}$.
For this strategy, every question pair satisfies
\begin{equation} \label{eq:PF-CHSH-pointwise-success}
\Pr \left[ a \oplus b = xy \oplus (1-x)\sigma \oplus x\tau | x,y \right]
 = \frac{2+\sqrt{2}}{4}.
\end{equation}
The measurements do not depend on the frame bits, and no Pauli correction is applied to the selected endpoint qubits.
In the circuit construction below, the frame enters only the verification condition.
Appendix~\ref{app:CHSH-details} gives the classical bound and the correlation calculation underlying these statements.

\section{The Two-Round One-Dimensional CHSH Problem} \label{sec:2R1D}

In this section, we formally define the 2R1D-CHSH problem, which is based on the PF-CHSH game introduced in Sec.~\ref{sec:CHSHg}.
We specify its input-output structure over two rounds and the corresponding relation, prescribe the uniform and independent input distribution used throughout this work, and define the average success probability with respect to this distribution.
Unlike a conventional relation problem completed in a single round, the input and output of the 2R1D-CHSH problem are supplied and produced, respectively, over two rounds.

The 2R1D-CHSH problem has separate input and output sets for the two rounds.
The output of the first round must be produced before the input to the second round is supplied.
Whether the complete input-output tuple satisfies the relation is determined only after the output of the second round has been produced.

For each integer $N\geq 2$, let $\mathcal{I}_N^{\mathrm{1st}}$ and $\mathcal{O}_N^{\mathrm{1st}}$ denote the input and output sets of the first round, respectively.
Each input to the first round consists of two $N$-bit strings.
The set of all allowed inputs to the first round is
\begin{equation} \label{eq:1Iset}
\mathcal{I}_N^{\mathrm{1st}} = \left\{ (\mathbf{p}[j], \mathbf{q}[k]) : 1 \leq j<k \leq N \right\}, 
\end{equation}
where $\mathbf{p}[j]$ and $\mathbf{q}[k]$ are the $N$-bit strings with a single nonzero component, given by
\begin{equation}
p_i = 
\begin{cases}
1, & i = j, \\
0, & i\neq j, 
\end{cases}
\quad
q_i = 
\begin{cases}
1, & i = k, \\
0, & i\neq k.
\end{cases}
\end{equation}
Thus, $\mathbf{p}[j]$ and $\mathbf{q}[k]$ specify the two selected positions $j$ and $k$, respectively.
The set of all possible outputs of the first round is
\begin{equation}
\mathcal{O}_N^{\mathrm{1st}} = \{0,1\}^N \times \{0,1\}^N.
\end{equation}
We denote a generic output of the first round by
\begin{equation}
(\mathbf{s}, \mathbf{t}) \in \mathcal{O}_N^{\mathrm{1st}}, 
\end{equation}
where the two component strings are written as
\begin{equation}
\mathbf{s} = (s_1, \ldots, s_N), \quad \mathbf{t} = (t_1, \ldots, t_N), 
\end{equation}
with $s_i, t_i\in\{0,1\}$ for every $i\in\{1, \ldots, N\}$.

Once the output of the first round has been produced, the second round begins.
Let $\mathcal{I}_N^{\mathrm{2nd}}$ and $\mathcal{O}_N^{\mathrm{2nd}}$ denote the input and output sets of the second round, respectively.
Both sets have the same form:
\begin{equation}
\mathcal{I}_N^{\mathrm{2nd}} = \mathcal{O}_N^{\mathrm{2nd}} = \{0,1\}^N\times\{0,1\}^N.
\end{equation}
We denote a generic input and output of the second round by
\begin{equation}
(\mathbf{x}, \mathbf{y}) \in \mathcal{I}_N^{\mathrm{2nd}}, \quad
(\mathbf{a}, \mathbf{b}) \in \mathcal{O}_N^{\mathrm{2nd}}, 
\end{equation}
where the four component strings are written as
\begin{align}
\mathbf{x}& = (x_1, \ldots, x_N), &\mathbf{y}& = (y_1, \ldots, y_N), \\
\mathbf{a}& = (a_1, \ldots, a_N), &\mathbf{b}& = (b_1, \ldots, b_N).
\end{align}
Each component satisfies $x_i, y_i, a_i, b_i\in\{0,1\}$ for every $i\in\{1, \ldots, N\}$.

We now define the relation associated with the 2R1D-CHSH problem. It is a subset
\begin{equation}
\mathcal{R}_N
\subseteq
\mathcal{I}_N^{\mathrm{1st}} \times
\mathcal{O}_N^{\mathrm{1st}} \times
\mathcal{I}_N^{\mathrm{2nd}} \times
\mathcal{O}_N^{\mathrm{2nd}}.
\end{equation}
For each selected pair $1\leq j<k\leq N$, we define the two associated Pauli-frame bits by
\begin{equation} \label{eq:2R1D-frame-bits}
\sigma_{j,k}(\mathbf{s}) = \bigoplus_{i = j+1}^{k} s_i, \quad
 \tau_{j,k}(\mathbf{t}) = \bigoplus_{i = j}^{k-1} t_i.
\end{equation}
In the quantum construction of Sec.~\ref{sec:Quantum}, these parities are the $X$- and $Z$-frame bits, respectively, of the Bell state shared by the selected registers.
Then, a complete input-output tuple belongs to $\mathcal{R}_N$ if and only if the input and output bits at the selected positions $j$ and $k$ satisfy the following PF-CHSH parity condition:
\begin{widetext}
\begin{equation} \label{eq:Relation}
\left( (\mathbf{p}[j], \mathbf{q}[k]), (\mathbf{s}, \mathbf{t}), (\mathbf{x}, \mathbf{y}), (\mathbf{a}, \mathbf{b}) \right)
\in \mathcal{R}_N
\quad \Longleftrightarrow \quad
a_j \oplus b_k = x_jy_k \oplus (1 - x_j)\sigma_{j,k}(\mathbf{s}) \oplus x_j\tau_{j,k}(\mathbf{t}).
\end{equation}
\end{widetext}

Only the four bits $x_j, y_k, a_j, b_k$ from the second round appear in the verification predicate.
The bits $x_j, y_k$ serve as the two PF-CHSH questions, whereas $a_j, b_k$ serve as the corresponding answers.
Together with the Pauli-frame bits $\sigma_{j,k}(\mathbf{s})$ and $\tau_{j,k}(\mathbf{t})$, these four bits satisfy precisely the PF-CHSH condition in Eq.~(\ref{eq:PF-CHSH-bit-condition}).

Equation~(\ref{eq:Relation}) shows that only two of the $2N$ input bits and two of the $2N$ output bits of the second round enter the verification predicate.
The remaining $2N-2$ input bits and $2N-2$ output bits may therefore appear redundant from the viewpoint of the static relation alone.
Nevertheless, we retain the full $2N$-bit representation because it provides a local encoding compatible with the circuit model considered below.

In particular, the quantum strategy will be implemented by a single circuit executed in two stages, corresponding to the two rounds of the 2R1D-CHSH problem.
The portion of the circuit used in the first round must be connected, at the selected positions $j$ and $k$, to the portion used in the second round.
Since these positions are chosen at random rather than fixed in the circuit architecture, the circuit must provide input and output registers at every possible position.
The $2N$-bit representation accomplishes this by assigning an input bit and an output bit to every site in each of the two $N$-site registers.
The relation then selects the components $x_j, y_k, a_j, b_k$ associated with the realized pair $(j,k)$.
This representation therefore allows a single circuit family to handle every possible choice of the selected positions without changing its architecture.

The locations of these input and output registers are part of the computational model.
In particular, the input bits at the selected positions are not supplied through additional terminals chosen after the first round.
Any selection or routing of these bits must be performed by the circuit itself.

The order of the two rounds is an essential part of the 2R1D-CHSH problem.
In particular, the input $(\mathbf{x}, \mathbf{y})$ to the second round is not available when the output $(\mathbf{s}, \mathbf{t})$ of the first round is produced.
Once produced, this output is fixed and cannot be modified after the input to the second round becomes available.
The necessity of this temporal condition, and what can go wrong if it is removed, will be discussed in Sec.~\ref{sec:Why}.

We complete the specification of the 2R1D-CHSH problem by defining its input distribution.
Throughout this work, the input to each round is sampled uniformly, and the input to the second round is independent of both the input and the output of the first round. 

In the probabilistic formulation below, uppercase symbols denote random variables and the corresponding lowercase symbols denote their realizations, while boldface is used for bit strings.
Thus, for example, $J = j$, $\mathbf{X} = \mathbf{x}$, and $A_j = a_j$ denote realization events.

The pair $(J,K)$ is chosen uniformly at random among all pairs of integers satisfying $1\leq J<K\leq N$.
Its probability mass function is therefore
\begin{equation}
\Pr[J=j,K=k] = \frac{1}{\binom{N}{2}}
\end{equation}
for every $1\leq j<k\leq N$.

After the output of the first round has been produced, the input strings $\mathbf{X}$ and $\mathbf{Y}$ are sampled independently and uniformly from $\{0,1\}^N$.
Their joint distribution is also independent of $J,K, \mathbf{S}, \mathbf{T}$.
More explicitly, this joint independence and uniformity are expressed by
\begin{equation} \label{eq:2R1D-input-independence}
\Pr \left[ \mathbf{X} = \mathbf{x}, \mathbf{Y} = \mathbf{y} | J = j, K = k, \mathbf{S} = \mathbf{s}, \mathbf{T} = \mathbf{t} \right]
 = \frac{1}{2^{2N}}, 
\end{equation}
whenever the conditioning event has nonzero probability. In particular, the two input bits that enter the relation satisfy
\begin{equation}
\Pr \left[ X_j = x, Y_k = y | J = j, K = k, \mathbf{S} = \mathbf{s}, \mathbf{T} = \mathbf{t} \right]
 = \frac{1}{4}
\end{equation}
for all $x,y\in\{0,1\}$.
Thus, the Pauli-frame bits fixed in the first round cannot influence the distribution of the PF-CHSH questions supplied in the second round.

The verifier generates these inputs using randomness independent of the strategy.
In particular, the input strings of the second round are independent of the complete internal state retained after the first round, including any classical or quantum registers, and of any private or shared randomness used by the strategy.

Under this input distribution, the average success probability of a two-round strategy $\mathsf{S}$ is defined as
\begin{equation} \label{eq:2R1D-average-success}
\overline{p}_{\mathrm{succ}}(\mathsf{S})
 = \frac{1}{4\binom{N}{2}} \sum_{1\leq j<k\leq N} \sum_{x,y\in\{0,1\}} p_{\mathrm{succ}}^{\mathsf{S}}(j,k;x,y).
\end{equation}
The quantity $p_{\mathrm{succ}}^{\mathsf S}(j,k;x,y)$ denotes the conditional probability that the output produced by the strategy $\mathsf S$ satisfies the relation when the selected positions are $j,k$ and the relevant input bits are $x,y$.
More precisely, it is defined as follows:
\begin{widetext}
\begin{equation} \label{eq:2R1D-conditional-success}
p_{\mathrm{succ}}^{\mathsf{S}}(j,k;x,y)
: = \Pr\nolimits_{\mathsf{S}}
\left[
A_j \oplus B_k = xy \oplus (1-x)\sigma_{j,k}(\mathbf{S}) \oplus x\tau_{j,k}(\mathbf{T})
| J = j, K = k, X_j = x, Y_k = y
\right].
\end{equation}
\end{widetext}
For any event $E$, the notation $\Pr\nolimits_{\mathsf{S}}[E]$ denotes the probability of $E$ in the experiment in which the prescribed inputs are sampled and the strategy $\mathsf{S}$ is executed.
Accordingly, $\Pr\nolimits_{\mathsf{S}}[E| C]$ denotes the corresponding conditional probability given an event $C$.
This probability includes any internal randomness of a classical strategy and any measurement randomness of a quantum strategy.

The variables $A_j$ and $B_k$ appear in Eq.~(\ref{eq:2R1D-conditional-success}) because the answers produced by the strategy $\mathsf{S}$ need not be deterministic, even after the relevant input bits have been fixed.
Similarly, $\sigma_{j,k}(\mathbf{S})$ and $\tau_{j,k}(\mathbf{T})$ are random variables determined by the random output of the first round.
Since the conditioning fixes only $X_j = x$ and $Y_k = y$, the conditional probability also averages over the remaining $2N-2$ input bits of the second round, the output of the first round, and all randomness inherent in the strategy.

The factor $1/\binom{N}{2}$ in Eq.~(\ref{eq:2R1D-average-success}) averages over the possible selected pairs $j<k$, whereas the factor $1/4$ averages over the four possible values of the relevant input pair $(x,y)$.
There is no separate success condition for the first round.
Its output instead fixes the Pauli-frame bits that enter the final condition evaluated after the second round.
Thus, $\overline{p}_{\mathrm{succ}}(\mathsf{S})$ evaluates the complete two-round strategy, even though the success condition is checked only after the second round.
This quantity will serve as the common performance measure for the quantum and classical strategies considered in the following sections.

\section{A Quantum Circuit of Constant Depth} \label{sec:Quantum}

In this section, we construct a quantum circuit of constant depth for the 2R1D-CHSH problem.
The construction shows that all quantum gates can be chosen to act on adjacent registers in one dimension and that the resulting average success probability is equal to the quantum value of the CHSH game.

For clarity, we divide the circuit into two portions corresponding to the two rounds of the problem.
The first portion is executed during the first round and generates the output strings $\mathbf{s}$ and $\mathbf{t}$.
The second portion is executed during the second round and generates the output strings $\mathbf{a}$ and $\mathbf{b}$.
The complete circuit is shown in Fig.~\ref{fig:QuantumCircuit}.

\begin{figure*}[t]
\centering
\includegraphics[width = 0.96\textwidth]{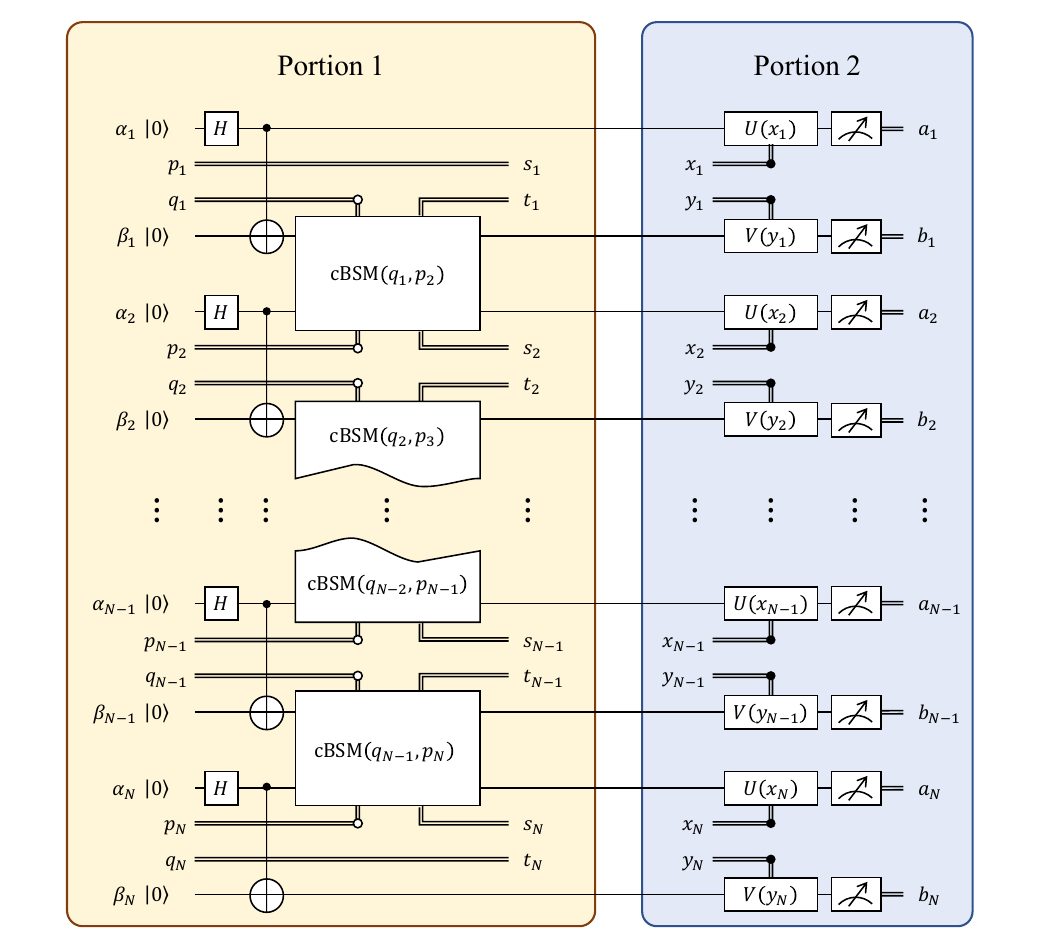}
\caption{Quantum circuit for the 2R1D-CHSH problem.
The first portion (left shaded area) receives the input pair $(\mathbf{p}[j], \mathbf{q}[k])$ of the first round, distributes entanglement between the selected positions, and generates the output pair $(\mathbf{s}, \mathbf{t})$.
The second portion (right shaded area) receives the input pair $(\mathbf{x}, \mathbf{y})$ of the second round and generates the output pair $(\mathbf{a}, \mathbf{b})$ by implementing the CHSH measurements.
An open control circle indicates that the corresponding operation is selected when the classical control bit is $0$.
In particular, $\operatorname{cBSM}(q_i, p_{i+1})$ acts when both $q_i$ and $p_{i+1}$ are $0$.
Each cBSM box includes the selected change to the Bell basis, the computational basis measurements, and the resets described in the text; in the inactive case, it leaves the quantum registers unchanged and reports the prescribed classical bits.
The gates $U(x_i)$ and $V(y_i)$ are defined in Eq.~(\ref{eq:MeasurementUnitaries}).}
\label{fig:QuantumCircuit}
\end{figure*}

The quantum circuit contains $2N$ quantum registers, denoted by $\alpha_i$ and $\beta_i$ for $i\in\{1, \ldots, N\}$.
Each register contains one qubit and is initially prepared in $\ket{0}$.
The classical input wires carrying $p_i$ and $q_i$ are placed locally beside the quantum registers that they control.
The wires carrying $x_i$ and $y_i$, which are supplied in the second round, are likewise placed beside $\alpha_i$ and $\beta_i$, respectively.
Including the classical wires available in the first round, the local ordering is given by the sequence
\begin{equation} \label{eq:QuantumRegisterOrdering}
\alpha_1, p_1, q_1, \beta_1, 
\alpha_2, p_2, q_2, \beta_2, 
\ldots, 
\alpha_N, p_N, q_N, \beta_N.
\end{equation}
The classical wires in Eq.~(\ref{eq:QuantumRegisterOrdering}) do not represent additional qubits.
Removing these wires gives the following induced ordering of the quantum registers:
\begin{equation} \label{eq:QuantumQubitOrdering}
\alpha_1, \beta_1, \alpha_2, \beta_2, \ldots, \alpha_N, \beta_N.
\end{equation}
In particular, both $\alpha_i$ and $\beta_i$, as well as $\beta_i$ and $\alpha_{i+1}$, are adjacent quantum registers.

We now describe the first portion of the circuit, which performs entanglement generation.
The first portion applies the Hadamard gate $H$ to every $\alpha_i$ and then applies a CNOT with control $\alpha_i$ and target $\beta_i$.
These gates prepare the Bell state defined in Eq.~(\ref{eq:Phi-plus}).
All Hadamard gates can be applied simultaneously, followed by the simultaneous application of all CNOT gates.

For every $i\in\{1, \ldots, N-1\}$, the first portion then applies the classically controlled instrument $\operatorname{cBSM}(q_i, p_{i+1})$ to the pair of qubits $\beta_i$ and $\alpha_{i+1}$.
Its action is determined by the classical bits $q_i$ and $p_{i+1}$.
If the following activation condition holds:
\begin{equation} \label{eq:cBSM_condition}
q_i = 0 = p_{i+1}, 
\end{equation}
the circuit applies a CNOT with control $\beta_i$ and target $\alpha_{i+1}$, applies $H$ to $\beta_i$, and measures both qubits in the computational basis.
This sequence implements the standard measurement in the Bell basis used in quantum teleportation~\cite{Bennett1993}.

The measurement results obtained from $\beta_i$ and $\alpha_{i+1}$ are reported as $t_i$ and $s_{i+1}$, respectively.
After these outcomes have been recorded, the measured qubits are reset to $\ket{0}$.
This reset does not affect the state of the unmeasured qubits because, conditioned on the recorded outcomes, the measured qubits are no longer entangled with the unmeasured registers.

If the condition in Eq.~(\ref{eq:cBSM_condition}) does not hold, the instrument leaves $\beta_i$ and $\alpha_{i+1}$ unchanged and reports the following deterministic outputs:
\begin{equation} \label{eq:cBSMInactiveOutputs}
t_i = q_i, \quad s_{i+1} = p_{i+1}.
\end{equation}
Since $\operatorname{cBSM}$ produces classical outcomes together with a conditional change of the quantum state, it is a quantum instrument in the standard sense~\cite{Davies1970, Ozawa1984}.

As shown in Fig.~\ref{fig:QuantumCircuit}, the boundary input bits $p_1$ and $q_N$ do not control any instance of $\operatorname{cBSM}$. They are copied directly to the corresponding outputs:
\begin{equation} \label{eq:BoundaryOutputs}
s_1 = p_1, \quad t_N = q_N.
\end{equation}
Thus, the first portion produces the output strings $\mathbf{s}$ and $\mathbf{t}$ required in the first round of the 2R1D-CHSH problem.

The first portion ends when $\mathbf{s}$ and $\mathbf{t}$ have been produced.
Only then is the input pair $(\mathbf{x}, \mathbf{y})$ of the second round supplied to the second portion.
This portion implements the binary observables defined in Eqs.~(\ref{eq:CHSH-Alice-observables}) and~(\ref{eq:CHSH-Bob-observables}).
More precisely, it performs the required changes of measurement basis by applying the following gates selected by the classical bits $x_i$ and $y_i$:
\begin{equation} \label{eq:MeasurementUnitaries}
\begin{aligned}
U(0)& = \mathbb{I}, &V(0)& = R_y(-\pi/4), \\
U(1)& = H, &V(1)& = R_y(\pi/4), 
\end{aligned}
\end{equation}
The rotation operator $R_y(\theta)$ is defined by $R_y(\theta)=e^{-i\theta Y/2}$, where $Y$ denotes the Pauli matrix
\begin{equation} \label{eq:Pauli-Y}
Y = \begin{pmatrix} 0 & -i\\ i & 0 \end{pmatrix}.
\end{equation}
The gates $U(x_i)$ and $V(y_i)$ act on $\alpha_i$ and $\beta_i$, respectively.
They are selected by classical inputs and are not controlled quantum gates.

Every quantum register is subsequently measured in the computational basis.
The results obtained from $\alpha_i$ and $\beta_i$ are reported as $a_i$ and $b_i$, respectively.
Since a computational basis measurement is a measurement of $Z$, the effective observables satisfy
\begin{equation} \label{eq:EffectiveObservables}
U(x_i)^\dagger ZU(x_i) = \mathcal{A}_{x_i}, \quad
V(y_i)^\dagger ZV(y_i) = \mathcal{B}_{y_i}, 
\end{equation}
where $\mathcal{A}_{x_i}$ and $\mathcal{B}_{y_i}$ are the observables defined in Eqs.~(\ref{eq:CHSH-Alice-observables}) and~(\ref{eq:CHSH-Bob-observables}).
Equation~(\ref{eq:EffectiveObservables}) therefore shows that the second portion implements the CHSH measurements.

We next specify the circuit depth. As usual, the depth of a quantum circuit is the number of sequential layers of operations, where operations with pairwise disjoint quantum supports may be performed within the same layer~\cite{Bravyi2020}.
For the present circuit, a local operation selected by a constant number of classical input bits is counted in the layer in which it acts.
Since the circuit contains intermediate measurements and resets, we additionally adopt the convention that a layer of measurements and a layer of resets each contribute one unit to the operational depth.
The initialization of the input qubits and the waiting time between the two rounds are not included.

The preparation of the initial Bell states requires two layers: one for the Hadamard gates on the registers $\alpha_i$ and one for the CNOT gates acting on $\alpha_i\beta_i$.
The controlled instruments $\operatorname{cBSM}(q_i, p_{i+1})$ require four further layers.
The selected CNOT gates are applied first, followed by the selected Hadamard gates, the measurements in the computational basis, and the resets of the measured qubits.
Within each of these layers, all operations act on disjoint quantum registers and can therefore be performed simultaneously.
Finally, the second portion requires one layer for the gates $U(x_i)$ and $V(y_i)$, followed by one layer for the measurements in the computational basis.
The operational depth of the quantum circuit $\mathcal Q_N$ is consequently
\begin{equation} \label{eq:OperationalDepth}
D_{\mathrm{op}}(\mathcal Q_N) = 2+4+2 = 8.
\end{equation}

If the measurement and reset within $\operatorname{cBSM}$ are regarded as a single quantum instrument, their two layers can be combined and the operational depth becomes $7$.
If only unitary layers are counted, the corresponding depth is $5$.
To avoid relying on either of these alternative conventions, we use the conservative bound $D_{\mathrm{op}}(\mathcal Q_N)\leq 8$ in Theorem~\ref{thm:ConstantDepthQuantumCircuit}.

The fan-in and fan-out are also bounded explicitly.
The gates $H$, $U(x_i)$, and $V(y_i)$, as well as each measurement and reset, act on one quantum register.
Every CNOT acts on two quantum registers.
Hence, the quantum fan-in of every elementary operation is at most $2$.
Each instance of $\operatorname{cBSM}$ depends on only two classical bits and acts on only two quantum registers.
Therefore, even when it is regarded as one hybrid operation, its total input arity is at most $4$.

A quantum output wire never branches and consequently has fan-out $1$.
Each classical input bit is used only within one local instance of $\operatorname{cBSM}$ or to select one basis rotation.
If $\operatorname{cBSM}$ is decomposed into the layers described above, each of its control bits is used by at most a constant number of local operations.
Similarly, each measurement outcome is used only as an output bit and, when needed, to control its local reset.
Thus, the classical fan-out is also bounded by a constant independent of $N$.
In particular, the circuit contains no gate whose fan-in or fan-out grows with $N$.

We now show that the fixed circuit $\mathcal Q_N$ works for every input in the support of the uniform distribution specified in Sec.~\ref{sec:2R1D}.
Fix $1\leq j<k\leq N$ and consider the input pair $(\mathbf{p}[j], \mathbf{q}[k])$ specified in Eq.~(\ref{eq:1Iset}).
For every $i\in\{j, \ldots, k-1\}$, one has $q_i = 0 = p_{i+1}$, so $\operatorname{cBSM}(q_i, p_{i+1})$ performs a measurement in the Bell basis on $\beta_i\alpha_{i+1}$. If $j>1$, the operation on $\beta_{j-1}\alpha_j$ is inactive because $p_j = 1$; if $j = 1$, $\alpha_j$ already lies at the left end of the register.
Similarly, if $k<N$, the operation on $\beta_k\alpha_{k+1}$ is inactive because $q_k = 1$; if $k = N$, $\beta_k$ lies at the right end.
Consequently, $\alpha_j$ and $\beta_k$ remain unmeasured and are isolated from the operations outside the interval from $j$ to $k$.
Measurements may also occur outside this interval, but they act on quantum registers separated from $\alpha_j$ and $\beta_k$ by the inactive operations or by the ends of the register.
They therefore have no effect on the state of the two selected registers.

To determine that state, consider two consecutive Bell states.
If the measurement of $\beta_i\alpha_{i+1}$ produces $(t_i, s_{i+1})$, the entanglement swapping identity gives the following transformation~\cite{Zukowski1993}:
\begin{equation} \label{eq:LocalEntanglementSwap}
\ket{\Phi_{0,0}}_{\alpha_i\beta_i}
\ket{\Phi_{0,0}}_{\alpha_{i+1}\beta_{i+1}}
\longmapsto
\ket{\Phi_{s_{i+1}, t_i}}_{\alpha_i\beta_{i+1}},
\end{equation}
conditioned on the recorded outcomes.
The framed Bell states are defined in Eq.~(\ref{eq:framed-Bell-state}).
Thus, $s_{i+1}$ contributes the $X$-frame bit $\sigma$, and $t_i$ contributes the $Z$-frame bit $\tau$.

The same identity applies when the Bell state on the left already carries a Pauli frame.
More precisely, for $j\leq i<k$, the measurement outcome $(t_i, s_{i+1})$ gives
\begin{equation}
\ket{\Phi_{\sigma,\tau}}_{\alpha_j\beta_i} \ket{\Phi_{0,0}}_{\alpha_{i+1}\beta_{i+1}}
\longmapsto
\ket{\Phi_{\sigma \oplus s_{i+1}, \tau \oplus t_i}}_{\alpha_j\beta_{i+1}}, 
\end{equation}
up to a physically irrelevant global phase.
This follows because the Pauli operator on the unmeasured register $\alpha_j$ commutes with the measurement on $\beta_i\alpha_{i+1}$, and the product of two Pauli operators adds their frame bits modulo two.

Consequently, conditioned on a fixed output $(\mathbf{s}, \mathbf{t})$ of the first round with nonzero probability, the normalized state of the selected registers is, up to a global phase, 
\begin{equation} \label{eq:RepeatedEntanglementSwap}
\begin{aligned}
\ket{\psi_{j,k}(\mathbf{s}, \mathbf{t})}
& = 
\ket{\Phi_{\sigma_{j,k}(\mathbf{s}), \tau_{j,k}(\mathbf{t})}}
_{\alpha_j\beta_k}
\\
& = 
\left(
Z^{\tau_{j,k}(\mathbf{t})}
X^{\sigma_{j,k}(\mathbf{s})}
 \otimes \mathbf 1
\right)
\ket{\Phi^+}_{\alpha_j\beta_k}.
\end{aligned}
\end{equation}
The two exponents are exactly the parities defined in Eq.~(\ref{eq:2R1D-frame-bits}).
Although this argument follows the measurements successively, they act on disjoint pairs of quantum registers and can be performed simultaneously.
Their sequential treatment is therefore only a proof device and does not increase the circuit depth.

The circuit does not compute either parity or apply the corresponding Pauli correction.
The parities appear only in the verification condition.
Moreover, the construction does not postselect on any measurement outcome: every transcript produced in the first round is retained.

For any fixed transcript of the first round with nonzero conditional probability, the gates $U(x_j)$ and $V(y_k)$ implement the CHSH measurements on the state in Eq.~(\ref{eq:RepeatedEntanglementSwap}).
The verification condition in Eq.~(\ref{eq:Relation}) is precisely the PF-CHSH winning condition for the frame $(\sigma_{j,k}(\mathbf{s}), \tau_{j,k}(\mathbf{t}))$.
Equation~(\ref{eq:PF-CHSH-pointwise-success}) therefore gives the conditional success probability
\begin{equation}
\frac{2+\sqrt{2}}{4} = \omega_{\mathrm{q}}(\mathrm{CHSH})
\end{equation}
for every question pair.
This value is independent of the selected positions, the transcript of the first round, and the remaining questions of the second round.
Its average over all these variables is therefore unchanged.

\begin{Thm} \label{thm:ConstantDepthQuantumCircuit}
For every $N\geq2$, there exists a fixed quantum circuit $\mathcal Q_N$ for the 2R1D-CHSH problem with operational depth at most $8$.
Every elementary quantum gate acts on at most two adjacent quantum registers, and every classically selected local operation depends on at most two classical input bits.
The quantum fan-out is $1$, and the classical fan-out is bounded by a constant.
The average success probability of this circuit is
\begin{equation} \label{eq:QuantumCircuitSuccess}
\overline{p}_{\mathrm{succ}}(\mathcal Q_N) = \omega_{\mathrm{q}}(\mathrm{CHSH}).
\end{equation}
\end{Thm}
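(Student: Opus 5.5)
The plan is to take the circuit $\mathcal Q_N$ of Fig.~\ref{fig:QuantumCircuit} as the witness and verify the structural claims before the success probability. For depth, we count layers as the convention above prescribes: one Hadamard layer and one CNOT layer prepare $\ket{\Phi^+}$ on each $\alpha_i\beta_i$. The instruments $\operatorname{cBSM}(q_i,p_{i+1})$ on the disjoint pairs $\beta_i\alpha_{i+1}$ contribute four layers: CNOT, $H$, measurement, and reset. The layer applying $U(x_i),V(y_i)$ and the final measurement layer add two more. This gives Eq.~(\ref{eq:OperationalDepth}), so $D_{\mathrm{op}}\le 8$. Locality follows from the ordering in Eq.~(\ref{eq:QuantumQubitOrdering}), because every two-qubit gate acts on either $\alpha_i\beta_i$ or $\beta_i\alpha_{i+1}$. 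Each cBSM reads only $(q_i,p_{i+1})$, and each basis rotation reads one bit. Quantum wires never branch. Classical fan-out is bounded because each input bit controls one cBSM or one rotation (or is copied as in Eq.~(\ref{eq:BoundaryOutputs})), and each outcome feeds only its output wire and its local reset.

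For the success probability, we fix $(j,k)$ and the first-round input $(\mathbf p[j],\mathbf q[k])$. The cBSM on $\beta_i\alpha_{i+1}$ is active exactly for $j\le i<k$. The instruments bordering the interval, namely those on $\beta_{j-1}\alpha_j$ and $\beta_k\alpha_{k+1}$, are inactive, or else the register ends there. Consequently the chain $\alpha_j\beta_j\cdots\alpha_k\beta_k$ evolves independently of the remaining registers, and operations outside it cannot affect the reduced state on $\alpha_j\beta_k$. We then argue by induction on $i$ from $j$ to $k-1$ using the entanglement-swapping identity Eq.~(\ref{eq:LocalEntanglementSwap}) in its framed form. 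Conditioned on the outcomes $(t_i,s_{i+1})$, the state of $\alpha_j\beta_{i+1}$ is $\ket{\Phi_{\sigma\oplus s_{i+1},\tau\oplus t_i}}$ up to a global phase. The induction rests on the fact that the Pauli $Z^\tau X^\sigma$ on $\alpha_j$ commutes with the Bell measurement on $\beta_i\alpha_{i+1}$. Because the active measurements act on disjoint pairs, ordering them sequentially is only a proof device. This yields Eq.~(\ref{eq:RepeatedEntanglementSwap}) with frame bits equal to the parities in Eq.~(\ref{eq:2R1D-frame-bits}): the $s_i$ for $i=j+1,\dots,k$ and the $t_i$ for $i=j,\dots,k-1$.

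Finally, we condition on any first-round transcript $(\mathbf s,\mathbf t)$ of positive probability. The second-round inputs are independent of it by Eq.~(\ref{eq:2R1D-input-independence}), so the gates $U(x_j),V(y_k)$ followed by $Z$-measurements measure $\mathcal A_{x_j}\otimes\mathcal B_{y_k}$ on $\ket{\Phi_{\sigma_{j,k},\tau_{j,k}}}$, by Eq.~(\ref{eq:EffectiveObservables}). By Eq.~(\ref{eq:PF-CHSH-pointwise-success}), the relation in Eq.~(\ref{eq:Relation}) then holds with probability exactly $(2+\sqrt2)/4$ for every $(x,y)$. This value does not depend on the transcript, on the other questions, or on $(j,k)$. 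Averaging it via Eqs.~(\ref{eq:2R1D-conditional-success}) and (\ref{eq:2R1D-average-success}) therefore gives Eq.~(\ref{eq:QuantumCircuitSuccess}).

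The main obstacle is the careful bookkeeping in the induction: the phases and the order of $Z^\tau X^\sigma$ must match the convention in Eq.~(\ref{eq:framed-Bell-state}), with $s_{i+1}$ as the $X$-bit and $t_i$ as the $Z$-bit. It also requires justifying that the resets and the inactive or outside measurements leave the selected pair unaffected. We would settle the phases by computing the four Bell-measurement branches on two Bell pairs explicitly, and the locality claim by noting that disjoint-support operations commute.
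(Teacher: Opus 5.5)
Your proposal is correct and follows essentially the same route as the paper: the $2+4+2$ layer count, the decoupling of the chain $\alpha_j\beta_j\cdots\alpha_k\beta_k$ by the inactive boundary instruments, and iterated framed entanglement swapping yielding $\ket{\Phi_{\sigma_{j,k}(\mathbf s),\tau_{j,k}(\mathbf t)}}$. The paper then, as you do, uses a pointwise PF-CHSH success probability of $(2+\sqrt2)/4$ that survives averaging. One small correction is that, with input $(\mathbf p[j],\mathbf q[k])$, the instrument $\operatorname{cBSM}(q_i,p_{i+1})$ is active for every $i\notin\{j-1,k\}$, not only for $j\le i<k$, so Bell measurements also occur outside the interval; as you later acknowledge, the inactive instruments at $i=j-1$ and $i=k$ (or the register ends) isolate these from $\alpha_j\beta_k$, so the argument is unaffected.
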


\begin{proof}
For a fixed $N$, the locations and connections of all operations in Fig.~\ref{fig:QuantumCircuit} are independent of $j,k, \mathbf{x}, \mathbf{y}$.
These inputs determine only which local operations are selected.
Hence, Fig.~\ref{fig:QuantumCircuit} describes one fixed circuit $\mathcal Q_N$, rather than a different circuit for each input.

For every input pair $(\mathbf{p}[j], \mathbf{q}[k])$, the operations selected by Eq.~(\ref{eq:cBSM_condition}) produce a framed Bell state on $\alpha_j\beta_k$, as shown by Eq.~(\ref{eq:RepeatedEntanglementSwap}).
Its frame bits are the parity terms occurring in Eq.~(\ref{eq:Relation}).
By Eq.~(\ref{eq:EffectiveObservables}), the second portion implements the optimal CHSH observables.
The relation condition is therefore satisfied with probability $\omega_{\mathrm{q}}(\mathrm{CHSH})$ for every fixed input and every output of the first round with nonzero conditional probability.
Averaging over all of them gives Eq.~(\ref{eq:QuantumCircuitSuccess}).

The layer decomposition preceding Eq.~(\ref{eq:OperationalDepth}) shows that the operational depth is at most $8$, independently of $N$.
By Eq.~(\ref{eq:QuantumQubitOrdering}), every CNOT acts on adjacent quantum registers.
The remaining quantum operations act on individual registers. Thus, all quantum operations are geometrically local in one dimension.
Finally, every elementary quantum operation has fan-in at most $2$, each local classical selection depends on at most two input bits, and no quantum wire branches.
All fan-in and fan-out bounds are therefore independent of $N$.
\end{proof}

Theorem~\ref{thm:ConstantDepthQuantumCircuit} establishes the quantum part of the depth separation.
A crucial feature of the construction is that the circuit does not compute the parities associated with the Pauli frame; these parities appear only in the relation condition.
This allows the quantum depth to remain constant even when the selected positions are far apart.

The depth bound concerns the circuit that produces the outputs.
The classical computation used by the verifier to evaluate the relation, including the Pauli-frame parities, is not included in this bound.

\section{A Classical Upper Bound on the Average Success Probability} \label{sec:Classical}

In this section, we derive an upper bound on the average success probability of classical circuits for the 2R1D-CHSH problem.
This result provides the classical counterpart of Theorem~\ref{thm:ConstantDepthQuantumCircuit}.
The bound depends only on the depth and gate fan-in of the circuit portion executed after the questions of the second round become available.
The computation performed in the first round is unrestricted.

Let $\mathcal{C}$ be a probabilistic classical circuit that respects the two-round ordering specified in Sec.~\ref{sec:2R1D}.
We decompose $\mathcal{C}$ into its first portion $\mathcal{C}_1$ and second portion $\mathcal{C}_2$.
The circuit $\mathcal{C}_1$ produces the output strings $\mathbf{s}$ and $\mathbf{t}$ and may pass an arbitrary classical state to $\mathcal{C}_2$.
Upon receiving the second questions, $\mathcal{C}_2$ produces the output strings $\mathbf{a}$ and $\mathbf{b}$.
We impose no restriction on the depth, size, or fan-in of $\mathcal{C}_1$. Let $D$ be the depth of $\mathcal{C}_2$, and suppose that every gate in $\mathcal{C}_2$ has fan-in at most $F$, where $F\geq 2$.
The circuit $\mathcal{C}_2$ may have arbitrary size and unrestricted fan-out, and it need not satisfy any geometric locality condition.

For each $N$, the circuit graph of $\mathcal{C}_2$, including the locations of its question inputs and answer outputs, is fixed independently of the first input, transcript, and internal randomness.
The state of the first round may supply arbitrary classical data to $\mathcal{C}_2$, but it cannot change the circuit wiring.
Any data-dependent selection or routing must be implemented within $\mathcal{C}_2$ and is included in its depth.

All classical wires carry individual bits.
The fan-in bound counts every input bit on which a gate depends, including any bits used for control, selection, or routing.
The internal randomness of the circuit is independent of the verifier's randomness.
In particular, the questions of the second round are independent of the complete state retained from the first round and of all random bits used by the circuit.

We use the standard circuit-graph notion of a backward lightcone and the associated lightcone method for shallow-circuit lower bounds~\cite{Bravyi2020}.
For an output wire $o$ of $\mathcal{C}_2$, let $L_{\mathcal{C}_2}^{\leftarrow}(o)$ be its backward lightcone, namely, the set of input wires $u$ of $\mathcal{C}_2$ for which there exists a directed path from $u$ to $o$ in the circuit graph.
Thus, $L_{\mathcal{C}_2}^{\leftarrow}(o)$ contains precisely the input wires that may influence $o$ through $\mathcal{C}_2$.

Define the set of exceptional endpoint pairs for $\mathcal{C}_2$ by
\begin{widetext}
\begin{equation} \label{eq:exceptional-pair-set}
\mathcal{B}(\mathcal{C}_2) = \left\{ (j,k): 1\leq j<k\leq N, \quad y_k\in L_{\mathcal{C}_2}^{\leftarrow}(a_j) \text{ or } x_j\in L_{\mathcal{C}_2}^{\leftarrow}(b_k) \right\}.
\end{equation}
\end{widetext}
For every pair $(j,k)\notin\mathcal{B}(\mathcal{C}_2)$, the output $a_j$ cannot depend on the question wire $y_k$, and the output $b_k$ cannot depend on the question wire $x_j$.

\begin{Lem} \label{lem:exceptional-pair-probability}
The cardinality of the exceptional set is bounded in terms of $N$, $F$, and $D$ as follows:
\begin{equation} \label{eq:number-of-exceptional-pairs}
\left|\mathcal{B}(\mathcal{C}_2)\right| \leq 2NF^D.
\end{equation}
Consequently, the probability that a uniformly sampled endpoint pair is exceptional obeys
\begin{equation} \label{eq:exceptional-pair-probability}
\Pr\left[(J,K)\in\mathcal{B}(\mathcal{C}_2)\right] \leq \min\left\{ 1, \frac{4F^D}{N-1} \right\}.
\end{equation}
\end{Lem}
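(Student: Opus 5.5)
The plan is the standard lightcone counting argument. First, since every gate of $\mathcal{C}_2$ has fan-in at most $F$ and the depth is $D$, I will show by induction on depth that each output wire $o$ satisfies $|L_{\mathcal{C}_2}^{\leftarrow}(o)|\leq F^D$. The base case is an input wire, whose lightcone contains only itself. For the inductive step, a gate at depth $d$ depends on at most $F$ wires of depth at most $d-1$. Each of these has a lightcone of size at most $F^{d-1}$, and a union bound gives $F^d$. The input wires of $\mathcal{C}_2$ include the question wires $x_i,y_i$ as well as the state passed from $\mathcal{C}_1$ and random bits. Only the question wires matter for $\mathcal{B}(\mathcal{C}_2)$, and these are a subset of the lightcone, so the bound $F^D$ still applies to them. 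Fan-out is irrelevant here, because only backward paths are counted.

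Second, I will bound $|\mathcal{B}(\mathcal{C}_2)|$ by splitting it into two sets:
\begin{equation*}
\mathcal{B}_1=\{(j,k):y_k\in L_{\mathcal{C}_2}^{\leftarrow}(a_j)\},\qquad
\mathcal{B}_2=\{(j,k):x_j\in L_{\mathcal{C}_2}^{\leftarrow}(b_k)\}.
\end{equation*}
For each fixed $j$, the number of $k$ with $y_k$ in the lightcone of $a_j$ is at most $F^D$, so $|\mathcal{B}_1|\leq NF^D$. Symmetrically, fixing $k$ gives $|\mathcal{B}_2|\leq NF^D$. Summing yields $|\mathcal{B}(\mathcal{C}_2)|\leq 2NF^D$. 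The circuit wiring is fixed independently of the first-round data, as stipulated above, so $\mathcal{B}(\mathcal{C}_2)$ is a deterministic set and the count is legitimate.

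Third, because $(J,K)$ is uniform over $\binom{N}{2}$ pairs, the probability that it is exceptional is at most
\begin{equation*}
\frac{2NF^D}{N(N-1)/2}=\frac{4F^D}{N-1}.
\end{equation*}
Combining this with the trivial bound $1$ gives the $\min$ in the statement.

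The only delicate point is the first step: making the depth convention precise. I will define the depth of a wire as the longest path from an input, so that the induction is well founded even in circuits where wires skip layers. I expect nothing else to be an obstacle.
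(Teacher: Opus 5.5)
Your proposal is correct and follows essentially the same route as the paper: an $F^D$ bound on each backward lightcone from bounded fan-in and depth, and a per-endpoint count giving at most $NF^D$ pairs for each of the two conditions. A union bound and division by $\binom{N}{2}$ then finish the argument, and your induction on wire depth simply makes the paper's backward-traversal step explicit.
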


\begin{proof}
At each step of a backward traversal from an output wire $o$, each gate has at most $F$ input wires.
Thus, one additional layer increases the number of possible ancestors by at most a factor of $F$.
Since a directed path from an input to an output passes through at most $D$ gates, every output wire $o$ obeys the backward-lightcone bound
\begin{equation} \label{eq:backward-lightcone-size}
\left| L_{\mathcal{C}_2}^{\leftarrow}(o) \right| \leq F^D.
\end{equation}

Fix $j$. The backward lightcone of $a_j$ therefore contains at most $F^D$ input wires.
In particular, at most $F^D$ indices $k$ satisfy the following lightcone-inclusion condition:
\begin{equation}
y_k\in L_{\mathcal{C}_2}^{\leftarrow}(a_j).
\end{equation}
Summing over $j$ shows that the first condition in Eq.~(\ref{eq:exceptional-pair-set}) holds for at most $NF^D$ endpoint pairs.

The same argument, with $a_j$ and $y_k$ replaced by $b_k$ and $x_j$, shows that the second condition in Eq.~(\ref{eq:exceptional-pair-set}) holds for at most $NF^D$ endpoint pairs. A union bound therefore gives Eq.~(\ref{eq:number-of-exceptional-pairs}). Since the endpoint pair is uniform over the $\binom{N}{2}$ choices satisfying $j<k$, the fraction of exceptional pairs among all admissible endpoint pairs satisfies
\begin{equation}
\frac{\left|\mathcal{B}(\mathcal{C}_2)\right|}{\binom{N}{2}} \leq \frac{4F^D}{N-1}.
\end{equation}
Combining this estimate with the trivial upper bound of one proves Eq.~(\ref{eq:exceptional-pair-probability}).
\end{proof}

We next bound the success probability for an endpoint pair outside $\mathcal{B}(\mathcal{C}_2)$. The essential point is that $\mathcal{C}_1$ has been completed before the second questions are chosen. Hence the output strings $\mathbf{s}$ and $\mathbf{t}$, together with any classical state passed from $\mathcal{C}_1$ to $\mathcal{C}_2$, are independent of the second questions.

\begin{Lem} \label{lem:nonexceptional-pair-classical-bound}
For every pair $(j,k)\notin\mathcal{B}(\mathcal{C}_2)$, the success probability of $\mathcal{C}$, conditioned on $J = j$ and $K = k$ and averaged over the questions of the second round and the internal randomness of the circuit, is at most $3/4$.
\end{Lem}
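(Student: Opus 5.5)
The plan is to reduce, for a fixed non-exceptional pair $(j,k)$, the behavior of $\mathcal{C}$ to a classical PF-CHSH strategy with shared randomness, and then invoke Eq.~(\ref{eq:PF-CHSH-values}). First I will condition on $J=j$, $K=k$ and on the full classical state $\Lambda$ produced by $\mathcal{C}_1$. This state includes $(\mathbf{s},\mathbf{t})$, every bit passed to $\mathcal{C}_2$, and all internal random bits of $\mathcal{C}$ (those of $\mathcal{C}_2$ may be drawn in advance and absorbed into $\Lambda$). By Eq.~(\ref{eq:2R1D-input-independence}) and the stated independence of the verifier's randomness, $(\mathbf{X},\mathbf{Y})$ is uniform and independent of $\Lambda$. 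In particular, $(X_j,Y_k)$ is uniform on $\{0,1\}^2$ given $\Lambda$, and the remaining question bits $\mathbf{X}_{-j},\mathbf{Y}_{-k}$ are uniform and independent of $(X_j,Y_k,\Lambda)$. I will therefore also fold these remaining bits into the shared randomness: set $\Lambda' = (\Lambda,\mathbf{X}_{-j},\mathbf{Y}_{-k})$, which is independent of $(X_j,Y_k)$.

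Next I will use the lightcone condition. With $\Lambda'$ fixed, $\mathcal{C}_2$ is a deterministic circuit of fixed wiring, so $A_j$ is a deterministic function of $\Lambda'$ together with the question wires in $L^{\leftarrow}_{\mathcal{C}_2}(a_j)$. Since $(j,k)\notin\mathcal{B}(\mathcal{C}_2)$, the wire $y_k$ is not in this set, so $A_j = f_{\Lambda'}(X_j)$. Symmetrically, $B_k = g_{\Lambda'}(Y_k)$. This is where fixed wiring matters: data from $\mathcal{C}_1$ cannot reroute $y_k$ into $a_j$, because the lightcone is a property of the fixed graph. The frame bits $\sigma=\sigma_{j,k}(\mathbf{S})$ and $\tau=\tau_{j,k}(\mathbf{T})$ are functions of $\Lambda'$ alone.

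For each fixed $\Lambda'=\lambda$, the pair $(f_\lambda,g_\lambda)$ is then a deterministic classical strategy for PF-CHSH with frame $(\sigma(\lambda),\tau(\lambda))$, and the questions are uniform. By Eq.~(\ref{eq:PF-CHSH-values}), its winning probability is at most $3/4$. Averaging over $\lambda$ with its distribution, which does not depend on $(X_j,Y_k)$, gives the bound $3/4$. The subtle point is that conditioning in Eq.~(\ref{eq:2R1D-conditional-success}) on $X_j=x$, $Y_k=y$ does not alter the distribution of $\Lambda'$, which is exactly what the independence ensures. Then $\frac14\sum_{x,y}p_{\mathrm{succ}}(j,k;x,y)=\mathbb{E}_{\lambda}\bigl[\tfrac14\sum_{x,y}\mathbf{1}[f_\lambda(x)\oplus g_\lambda(y)=xy\oplus(1-x)\sigma(\lambda)\oplus x\tau(\lambda)]\bigr]\le 3/4$.

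The main obstacle I expect is making the step ``$A_j$ depends only on $\Lambda'$ and $X_j$'' rigorous. This requires a precise formalization of the circuit graph: wires carrying $\mathcal{C}_1$'s state and random bits must count as inputs of $\mathcal{C}_2$ that are distinct from the question wires. It also requires the observation that a deterministic circuit output is a function of the input wires in its backward lightcone. Both facts follow by induction over gates in topological order.
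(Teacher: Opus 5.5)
Your proposal is correct and follows essentially the same route as the paper. Both fix all random bits, the retained first-round state (including $\mathbf{s},\mathbf{t}$), and the off-pair questions $\mathbf{X}_{-j},\mathbf{Y}_{-k}$. Both then use the absence of directed paths $y_k\to a_j$ and $x_j\to b_k$ to obtain deterministic local functions $a_j=A_\lambda(x_j)$, $b_k=B_\lambda(y_k)$ for a fixed-frame PF-CHSH game with uniform questions, bound its success by $3/4$, and average over $\Lambda$. Your explicit check that conditioning on $X_j=x$, $Y_k=y$ leaves the distribution of $\Lambda'$ unchanged is a step the paper uses only implicitly.
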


\begin{proof}
Fix a pair $(j,k)\notin\mathcal{B}(\mathcal{C}_2)$.
Let $R$ denote all random bits used by the circuit, and let $M$ denote the complete classical state retained after the first round, including the output strings $\mathbf{S}$ and $\mathbf{T}$.
Define
\begin{equation}
\mathbf{X}_{-j}: = (X_i)_{i\ne j}, \quad
\mathbf{Y}_{-k}: = (Y_i)_{i\ne k}, 
\end{equation}
and collect the variables to be fixed into
\begin{equation}
\Lambda: = (R, M, \mathbf{X}_{-j}, \mathbf{Y}_{-k}).
\end{equation}
The questions of the second round are sampled independently and uniformly, independently of the circuit's randomness and its retained state.
Hence, for every value $\lambda$ with positive conditional probability, 
\begin{equation}
\Pr\left[ X_j = x, Y_k = y | J = j, K = k, \Lambda = \lambda \right] = \frac{1}{4}
\end{equation}
for all $x,y\in\{0,1\}$.

Fix such a value $\lambda$.
All random bits and all inputs of $\mathcal{C}_2$ other than $x_j$ and $y_k$ are now fixed.
Since $(j,k)\notin\mathcal{B}(\mathcal{C}_2)$, there is no directed path from $y_k$ to $a_j$, or from $x_j$ to $b_k$, in the circuit graph.
Consequently, there exist deterministic functions $A_\lambda, B_\lambda:\{0,1\}\to\{0,1\}$ such that
\begin{equation} \label{eq:local-functions-outside-exceptional-set}
a_j = A_\lambda(x_j), \quad
b_k = B_\lambda(y_k).
\end{equation}
Their dependence on the fixed indices $j$ and $k$ is suppressed.

The value of $\lambda$ also fixes the output strings $\mathbf{s}, \mathbf{t}$, and therefore fixes $\sigma_{j,k}(\mathbf{s})$ and $\tau_{j,k}(\mathbf{t})$.
Thus, Eq.~(\ref{eq:local-functions-outside-exceptional-set}) defines a deterministic classical local strategy for the corresponding PF-CHSH game with independent and uniform questions.
The classical value in Eq.~(\ref{eq:PF-CHSH-values}) therefore bounds its success probability by $3/4$.

This bound holds for every value of $\lambda$ with positive conditional probability.
Its average over $\Lambda$, conditioned on $J = j$ and $K = k$, is therefore also at most $3/4$.
\end{proof}

The preceding lemmas yield the following upper bound on the average success probability of classical circuits for the 2R1D-CHSH problem.

\begin{Thm} \label{thm:classical-average-success-bound}
Let $N\geq2$, and let $\mathcal{C}$ be a probabilistic classical circuit for the 2R1D-CHSH problem in the circuit model specified above.
In particular, the wiring of its second portion $\mathcal{C}_2$, including the locations of its question inputs and answer outputs, is fixed independently of the input and transcript of the first round.
Suppose that $\mathcal{C}_2$ has depth $D$ and that every gate has fan-in at most $F$, where $F\geq2$.
Under the input distribution specified in Sec.~\ref{sec:2R1D}, its average success probability satisfies
\begin{equation} \label{eq:classical-average-success-bound}
\overline{p}_{\mathrm{succ}}(\mathcal{C}) \leq \frac{3}{4}+ \min\left\{ \frac{1}{4}, \frac{F^D}{N-1} \right\}.
\end{equation}
No restriction is imposed on the depth, size, or fan-in of $\mathcal{C}_1$, or on the size, fan-out, or geometric locality of $\mathcal{C}_2$.
\end{Thm}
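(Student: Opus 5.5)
The plan is to split the average success probability in Eq.~(\ref{eq:2R1D-average-success}) according to whether the selected pair is exceptional. Define
\[
p(j,k) := \frac{1}{4}\sum_{x,y\in\{0,1\}} p_{\mathrm{succ}}^{\mathcal{C}}(j,k;x,y),
\]
which is the success probability conditioned on $J=j$, $K=k$, averaged over the questions and the internal randomness. The uniformity and independence of the second-round questions, Eq.~(\ref{eq:2R1D-input-independence}), justify this identification. Then
\[
\overline{p}_{\mathrm{succ}}(\mathcal{C}) = \frac{1}{\binom{N}{2}}\sum_{j<k} p(j,k).
\]
I would split this sum over $(j,k)\in\mathcal{B}(\mathcal{C}_2)$ and $(j,k)\notin\mathcal{B}(\mathcal{C}_2)$. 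Note that $\mathcal{B}(\mathcal{C}_2)$ is a deterministic set, because the wiring of $\mathcal{C}_2$ is fixed independently of the first-round input, the transcript and the randomness.

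For pairs outside $\mathcal{B}(\mathcal{C}_2)$, Lemma~\ref{lem:nonexceptional-pair-classical-bound} gives $p(j,k)\leq 3/4$. For pairs inside it, I would use only the trivial bound $p(j,k)\leq 1$. Writing $\beta:=\Pr[(J,K)\in\mathcal{B}(\mathcal{C}_2)]$, this gives
\[
\overline{p}_{\mathrm{succ}}(\mathcal{C}) \leq (1-\beta)\tfrac{3}{4}+\beta = \tfrac{3}{4}+\tfrac{\beta}{4}.
\]
Lemma~\ref{lem:exceptional-pair-probability} then supplies $\beta\leq\min\{1,4F^D/(N-1)\}$. Dividing by four turns this into exactly $\min\{1/4,F^D/(N-1)\}$, which is Eq.~(\ref{eq:classical-average-success-bound}).

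The only point needing care is the correct use of the conditioning. Lemma~\ref{lem:nonexceptional-pair-classical-bound} is stated conditionally on $J=j,K=k$ and averaged over the questions. I must check that this is the same quantity as the $\frac14\sum_{x,y}$ average of Eq.~(\ref{eq:2R1D-conditional-success}). That holds because $\Pr[X_j=x,Y_k=y\mid J=j,K=k]=1/4$.

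I would also remark that nothing about $\mathcal{C}_1$ entered the argument beyond its completion before the questions were sampled, and no size, fan-out or locality assumption on $\mathcal{C}_2$ was used. This justifies the final sentence of the theorem. I expect no real obstacle, since the step is a convex combination of the two lemmas.
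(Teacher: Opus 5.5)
Your proposal is correct and follows the paper's proof. Both split the endpoint pairs into $\mathcal{B}(\mathcal{C}_2)$ and its complement, apply Lemma~\ref{lem:nonexceptional-pair-classical-bound} on the complement and the trivial bound $1$ on $\mathcal{B}(\mathcal{C}_2)$ to get $\frac{3}{4}+\frac{q}{4}$, and then substitute the estimate of Lemma~\ref{lem:exceptional-pair-probability}. You also make explicit two points the paper leaves implicit: that the $\frac{1}{4}\sum_{x,y}$ average equals the lemma's conditional probability, and that $\mathcal{B}(\mathcal{C}_2)$ is deterministic because the wiring of $\mathcal{C}_2$ is fixed.
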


\begin{proof}
Let $q$ denote the probability that the sampled endpoint pair is exceptional; explicitly, 
\begin{equation}
q = \Pr\left[ (J,K)\in\mathcal{B}(\mathcal{C}_2) \right].
\end{equation}
The endpoint-pair distribution can be partitioned into the event $(J,K)\in\mathcal{B}(\mathcal{C}_2)$, which has probability $q$, and its complementary event, which has probability $1-q$.
On the complementary event, Lemma~\ref{lem:nonexceptional-pair-classical-bound} bounds the conditional success probability by $3/4$.
On the exceptional event, the success probability is bounded by the trivial upper bound of one.
Averaging these two conditional bounds with weights $1-q$ and $q$, respectively, gives
\begin{equation} \label{eq:success-bound-in-terms-of-exceptional-pairs}
\overline{p}_{\mathrm{succ}}(\mathcal{C}) \leq \frac{3}{4}(1-q)+q = \frac{3}{4}+\frac{q}{4}.
\end{equation}
Substituting the estimate from Lemma~\ref{lem:exceptional-pair-probability} into Eq.~(\ref{eq:success-bound-in-terms-of-exceptional-pairs}) gives Eq.~(\ref{eq:classical-average-success-bound}).
\end{proof}

Theorem~\ref{thm:classical-average-success-bound} depends only on the depth and fan-in of $\mathcal{C}_2$.
In particular, arbitrary classical computation performed by $\mathcal{C}_1$ before the second questions are chosen cannot overcome the restriction imposed by a shallow $\mathcal{C}_2$.

\section{The Depth-Separation Theorem} \label{sec:DepthSeparation}

In this section, we derive a lower bound on classical circuit depth from the upper bound established in Theorem~\ref{thm:classical-average-success-bound} and compare it with the constant-depth quantum circuit constructed in Sec.~\ref{sec:Quantum}.
We first derive the relevant depth condition and then combine the quantum and classical results into a separation theorem.

The quantum circuit $\mathcal Q_N$ constructed in Sec.~\ref{sec:Quantum} has operational depth at most eight, independently of $N$, and achieves the average success probability stated in Eq.~(\ref{eq:QuantumCircuitSuccess}).
To compare this circuit with a classical circuit, we allow the average success probability of the classical circuit to be up to an additive amount $\delta$ below that of the quantum circuit.
We require this tolerance to be smaller than the gap between the quantum and classical values of the CHSH game.
This requirement is expressed by the following condition:
\begin{equation} \label{eq:admissible-additive-error}
0 \leq \delta < \frac{\sqrt{2}-1}{4}.
\end{equation}
After allowing this tolerance, a positive advantage over the classical CHSH value $3/4$ still remains.
We denote this remaining advantage by $\epsilon_{\delta}$, which is given explicitly by
\begin{equation} \label{eq:classical-advantage-epsilon}
\epsilon_{\delta} = \frac{\sqrt{2}-1}{4}-\delta.
\end{equation}
The restriction in Eq.~(\ref{eq:admissible-additive-error}), together with the definition in Eq.~(\ref{eq:classical-advantage-epsilon}), guarantees that $\epsilon_{\delta}>0$.

Now consider a probabilistic classical circuit $\mathcal{C}$ whose second portion $\mathcal{C}_2$ has depth $D$ and maximum gate fan-in at most $F$.
Suppose that the average success probability of $\mathcal{C}$ is no more than $\delta$ below that of $\mathcal Q_N$.
This requirement can be written as
\begin{equation} \label{eq:classical-quantum-approximation}
\overline{p}_{\mathrm{succ}}(\mathcal{C}) \geq \overline{p}_{\mathrm{succ}}(\mathcal Q_N)-\delta.\end{equation}
This requirement concerns only the average success probability; it does not require the classical circuit to approximate the full output distribution of $\mathcal Q_N$.
By Eq.~(\ref{eq:QuantumCircuitSuccess}), this condition requires $\mathcal{C}$ to achieve an average success probability of at least $3/4+\epsilon_{\delta}$.
In other words, the excess term in the classical upper bound of Theorem~\ref{thm:classical-average-success-bound} must be at least $\epsilon_{\delta}$.
The fan-in and depth of $\mathcal{C}_2$ must therefore satisfy the following necessary condition:
\begin{equation} \label{eq:fan-in-depth-requirement}
F^D \geq \epsilon_{\delta}(N-1).
\end{equation}
Equation~(\ref{eq:fan-in-depth-requirement}) implies $D\geq\log_F[\epsilon_\delta(N-1)]$.
For fixed $F$ and fixed $\delta$, this lower bound grows logarithmically with $N$.
Since circuit depth is a nonnegative integer, the resulting lower bound must be rounded up and bounded below by zero.
Combining this observation with the quantum construction gives the main theorem.

\begin{Thm}[Depth separation] \label{thm:depth-separation}
Fix a maximum gate fan-in $F\geq 2$, and let $\delta$ satisfy Eq.~(\ref{eq:admissible-additive-error}).
For every $N\geq 2$, the 2R1D-CHSH problem of size $N$ admits a quantum circuit $\mathcal Q_N$ with operational depth $D_{\mathrm{op}}(\mathcal Q_N)\leq 8$ and average success probability given by Eq.~(\ref{eq:QuantumCircuitSuccess}).

In contrast, let $\mathcal{C}$ be any probabilistic classical circuit for the 2R1D-CHSH problem of size $N$ in the fixed-wiring circuit model of Sec.~\ref{sec:Classical}.
Suppose that its second portion $\mathcal{C}_2$ has depth $D$ and maximum gate fan-in at most $F$, and that $\mathcal{C}$ satisfies the approximation requirement in Eq.~(\ref{eq:classical-quantum-approximation}).
Then the depth of $\mathcal{C}_2$ obeys the following lower bound:
\begin{equation} \label{eq:classical-depth-lower-bound}
D \geq \max\left\{ 0, \left\lceil \log_{F}\left[ (N-1) \left( \frac{\sqrt{2}-1}{4}-\delta \right) \right] \right\rceil \right\}.
\end{equation}
Consequently, for every fixed $F$ and every fixed $\delta<(\sqrt{2}-1)/4$, any such classical circuit family requires depth $D = \Omega(\log N)$ in its second portion, whereas the quantum circuit family has operational depth at most eight.
\end{Thm}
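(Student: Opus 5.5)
The theorem combines Theorem~\ref{thm:ConstantDepthQuantumCircuit} and Theorem~\ref{thm:classical-average-success-bound}, so the plan is to treat the quantum and classical halves separately and then convert the success bound into a depth bound. The quantum half needs nothing new. For every $N\geq2$, Theorem~\ref{thm:ConstantDepthQuantumCircuit} already provides the fixed circuit $\mathcal Q_N$ with $D_{\mathrm{op}}(\mathcal Q_N)\leq 8$ and $\overline{p}_{\mathrm{succ}}(\mathcal Q_N)=\omega_{\mathrm{q}}(\mathrm{CHSH})=(2+\sqrt2)/4$.

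For the classical half, I would substitute Eq.~(\ref{eq:QuantumCircuitSuccess}) into the approximation requirement Eq.~(\ref{eq:classical-quantum-approximation}). This gives $\overline{p}_{\mathrm{succ}}(\mathcal C)\geq 3/4+\epsilon_\delta$, where $\epsilon_\delta>0$ is defined by Eq.~(\ref{eq:classical-advantage-epsilon}). Theorem~\ref{thm:classical-average-success-bound} gives the upper bound $\overline{p}_{\mathrm{succ}}(\mathcal C)\leq 3/4+\min\{1/4,F^D/(N-1)\}$. Chaining the two bounds yields $\epsilon_\delta\leq\min\{1/4,F^D/(N-1)\}\leq F^D/(N-1)$, which is Eq.~(\ref{eq:fan-in-depth-requirement}): $F^D\geq\epsilon_\delta(N-1)$.

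The remaining step is converting this inequality into Eq.~(\ref{eq:classical-depth-lower-bound}), and the only care needed is in the case split. Since $F\geq2$ and $\epsilon_\delta(N-1)>0$, taking $\log_F$ (which is monotone) gives $D\geq\log_F[\epsilon_\delta(N-1)]$. Because $D$ is an integer, this sharpens to $D\geq\lceil\log_F[\epsilon_\delta(N-1)]\rceil$. Because $D\geq0$ trivially, we may take the maximum with $0$; this covers small $N$, where the logarithm is negative.

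For the asymptotic statement, I would fix $F$ and $\delta$. Then $\log_F[\epsilon_\delta(N-1)]=(\ln(N-1)+\ln\epsilon_\delta)/\ln F$, and since $\ln\epsilon_\delta$ and $\ln F$ are constants independent of $N$, this is $\Omega(\log N)$. The fixed-wiring hypothesis enters only through the appeal to Theorem~\ref{thm:classical-average-success-bound} and needs no separate treatment. I expect no real obstacle, since the substantive work was done in Lemmas~\ref{lem:exceptional-pair-probability} and~\ref{lem:nonexceptional-pair-classical-bound}.
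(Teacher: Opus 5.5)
Your proposal is correct and follows essentially the same route as the paper. You invoke Theorem~\ref{thm:ConstantDepthQuantumCircuit} for the quantum half, and you chain Eq.~(\ref{eq:classical-quantum-approximation}) with Theorem~\ref{thm:classical-average-success-bound} to obtain $F^D\geq\epsilon_\delta(N-1)$; you then take $\log_F$, round up using integrality of $D$, and take the maximum with $0$, exactly as in the derivation preceding the theorem.
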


The quantum part of Theorem~\ref{thm:depth-separation} shows that, for every problem size $N$, the 2R1D-CHSH problem admits a quantum circuit of depth at most eight whose average success probability is exactly the quantum value of the CHSH game.
Thus, increasing $N$ neither increases the required quantum depth nor decreases the achievable average success probability.

The classical part must be interpreted relative to the classical CHSH value $3/4$.
The theorem does not rule out a constant-depth classical circuit family that attains this baseline value. Rather, it shows that no bounded-fan-in, fixed-wiring classical circuit family of constant depth can maintain an $N$-independent positive advantage above $3/4$.
In particular, such a classical circuit family cannot match the quantum CHSH value, or remain within any fixed additive error $\delta<(\sqrt{2}-1)/4$ of that value, uniformly as $N$ increases.

The logarithmic dependence on $N$ is optimal up to constant factors.
To see this, consider a classical strategy whose first portion outputs $\mathbf{s} = \mathbf{t} = \mathbf0$ and retains $\mathbf{p}[j]$.
In the second portion, for $1\leq i\leq N$, define
\begin{equation}
\xi: = \bigvee_{r = 1}^{N}(p_r\land x_r), \quad
a_i: = 0, \quad
b_i: = \xi\land y_i, 
\end{equation}
where $\land$ and $\bigvee$ denote Boolean AND and OR, respectively.
Since $\mathbf{p}[j]$ has its unique nonzero entry at $j$, one has $\xi = x_j$. Both Pauli-frame bits vanish, and therefore
\begin{equation}
a_j \oplus b_k = x_jy_k.
\end{equation}
The strategy succeeds with probability one for every allowed input.

A layer of binary AND gates computes the terms $p_r\land x_r$.
A balanced tree of binary OR gates computes $\xi$ in depth $\lceil\log_2N\rceil$, and a final layer computes all outputs $b_i$.
Thus, the second portion has size $O(N)$, fan-in at most two, and depth at most $\lceil\log_2N\rceil+2$.
Its wiring is fixed independently of the selected positions.
Together with Theorem~\ref{thm:classical-average-success-bound}, this establishes the optimal $\Theta(\log N)$ scaling of the classical second-round depth required to achieve success probability at least $3/4 + \varepsilon$, for every fixed $0 < \varepsilon \le 1/4$.

The class $\mathrm{AC}^{0}$ consists of decision problems solvable by Boolean circuits of constant depth and size polynomial in the input length, using AND and OR gates of unbounded fan-in and NOT gates~\cite{Watts2019}.
For relation problems, the term $\mathrm{AC}^{0}$ circuit refers to the same circuit restrictions with multiple output bits.

If unbounded fan-in OR gates are allowed, the OR tree in the preceding strategy can be replaced by one gate.
The second portion then has depth three and size $O(N)$, while the first portion only retains $\mathbf p[j]$ and outputs constant frame bits.
Thus, the strategy admits an implementation by $\mathrm{AC}^{0}$ circuits that respects the ordering of the two rounds.
The present relation therefore does not yield a separation against this circuit class.

\section{Further Remarks} \label{sec:Remarks}

In this section, we provide two further remarks on the structure and practical implications of the 2R1D-CHSH problem.
We first explain why the separation of the problem into two rounds is essential for obtaining the depth separation by contrasting it with the corresponding one round formulation.
We then examine the finite size qubit requirements for demonstrating quantum advantage and compare them with those of the 1D Magic Square and Magic Pentagram problems.

\subsection{Why Two Rounds Are Required} \label{sec:Why}

The two-round structure of the 2R1D-CHSH problem is essential rather than a technical convenience.
To see this, consider the one-round one-dimensional CHSH (1R1D-CHSH) problem obtained by removing the temporal separation between the two rounds.
For a fixed $N\geq 2$, the input consists of the strings $(\mathbf{p}[j], \mathbf{q}[k])$ that encode the positions and the strings $(\mathbf{x}, \mathbf{y})$ that specify the CHSH questions, all of which are supplied simultaneously.
We write this combined input as
\begin{equation}
\left((\mathbf{p}[j], \mathbf{q}[k]), (\mathbf{x}, \mathbf{y})\right).
\end{equation}
The circuit produces, in the same round, an output consisting of the Pauli-frame strings $(\mathbf{s}, \mathbf{t})$ and the CHSH-answer strings $(\mathbf{a}, \mathbf{b})$.
We write this combined output as
\begin{equation}
\left((\mathbf{s}, \mathbf{t}), (\mathbf{a}, \mathbf{b})\right).
\end{equation}
The input and output alphabets, their one-dimensional arrangement, the input distribution, the Pauli-frame bits in Eq.~(\ref{eq:2R1D-frame-bits}), and the validity condition in Eq.~(\ref{eq:Relation}) are otherwise unchanged.
Thus, the only difference from the 2R1D-CHSH problem is the causal order of the inputs and outputs.
In the one-round formulation, the circuit may choose the Pauli-frame outputs $(\mathbf{s}, \mathbf{t})$ and the CHSH-answer outputs $(\mathbf{a}, \mathbf{b})$ jointly after observing both $(\mathbf{p}[j], \mathbf{q}[k])$ and $(\mathbf{x}, \mathbf{y})$.

The simultaneous availability of these inputs permits a simple deterministic classical strategy.
At every site $i\in\{1, \ldots, N\}$, the circuit assigns its Pauli-frame and answer output bits according to the following local rules:
\begin{equation}
s_i = q_i y_i, \quad t_i = 0, \quad a_i = 0, \quad b_i = y_i.
\end{equation}
Because $\mathbf{q}[k]$ has its unique nonzero entry at position $k$, substituting these outputs into the definitions in Eq.~(\ref{eq:2R1D-frame-bits}) yields the following values of the two relevant Pauli-frame bits:
\begin{equation}
\sigma_{j,k}(\mathbf{s}) = \bigoplus_{i = j+1}^{k}q_i y_i = y_k, \quad
 \tau_{j,k}(\mathbf{t}) = 0.
\end{equation}
Using these frame-bit values, the expression on the right-hand side of the validity condition in Eq.~(\ref{eq:Relation}) simplifies as follows:
\begin{align}
&x_jy_k \oplus (1-x_j)\sigma_{j,k}(\mathbf{s}) \oplus x_j\tau_{j,k}(\mathbf{t}) \nonumber \\
& = x_jy_k \oplus (1-x_j)y_k \\
& = y_k \\
& = a_j \oplus b_k.
\end{align}
Hence the validity condition holds for every $1\leq j<k\leq N$ and every input $(\mathbf{x}, \mathbf{y})$.

This strategy can be implemented by a geometrically local classical circuit of depth one.
At each site, one AND gate computes $s_i = q_i y_i$, a wire copies $y_i$ to $b_i$, and the outputs $t_i$ and $a_i$ are set to zero.
All AND gates act independently and in parallel.
Neither the circuit depth nor the interaction range depends on $N$.
The preceding construction establishes the following proposition.

\begin{Prop} \label{prop:1R1D-classical-strategy}
For every $N\geq 2$, the 1R1D-CHSH problem admits a geometrically local classical circuit of depth one whose success probability is one.
\end{Prop}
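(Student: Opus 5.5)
The plan is to verify directly that the explicit local rule stated just before Proposition~\ref{prop:1R1D-classical-strategy}, namely $s_i = q_i y_i$, $t_i = 0$, $a_i = 0$, $b_i = y_i$, produces an output satisfying Eq.~(\ref{eq:Relation}) for every admissible input. We then check that this rule is realized by a depth-one, geometrically local circuit. Because the strategy is deterministic and correct on every input in the support, its success probability is one under the input distribution of Sec.~\ref{sec:2R1D}, or indeed under any distribution.

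For correctness, we fix $1\leq j<k\leq N$ and arbitrary $(\mathbf{x},\mathbf{y})$. Since $\mathbf{q}[k]$ has its single nonzero entry at $k$, and $k$ lies in the range $j+1,\ldots,k$ of the parity defining $\sigma_{j,k}$ in Eq.~(\ref{eq:2R1D-frame-bits}), we obtain $\sigma_{j,k}(\mathbf{s}) = y_k$ and $\tau_{j,k}(\mathbf{t})=0$. The key point is that the endpoint $k$ is included in the range for $\sigma$; this is why the $X$-frame parity, rather than the $Z$-frame parity, is the one we choose to carry $y_k$. Substituting into the right-hand side of Eq.~(\ref{eq:Relation}) gives $x_jy_k\oplus(1-x_j)y_k = y_k$. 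The left-hand side is $a_j\oplus b_k = 0\oplus y_k = y_k$, so the condition holds identically.

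For the circuit claims, each output bit at site $i$ depends only on the inputs $q_i,y_i$ at the same site. Here $s_i$ is a single binary AND gate, $b_i$ is a wire copy of $y_i$, and $t_i,a_i$ are constants. With the site-wise ordering analogous to Eq.~(\ref{eq:QuantumRegisterOrdering}), with $x_i,y_i$ placed beside the site, every gate touches only wires of one site, so the circuit is geometrically local with interaction range independent of $N$. All gates act in parallel, so the depth is one.

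There is no real obstacle. The only step needing care is the index bookkeeping in the parity range, together with the convention that constants and wire copies contribute zero depth. That convention ensures the depth is exactly one, coming from the AND layer.
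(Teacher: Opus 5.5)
Your proposal is correct and follows the paper's argument essentially verbatim. It uses the same local rule $s_i=q_iy_i$, $t_i=a_i=0$, $b_i=y_i$, and the same evaluation $\sigma_{j,k}(\mathbf{s})=y_k$, $\tau_{j,k}(\mathbf{t})=0$, which reduces the relation to $a_j\oplus b_k=y_k$; it then realizes the rule with one parallel layer of site-wise AND gates, plus wire copies and constants, whose depth and interaction range do not depend on $N$.
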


The same input-output rule also admits a quantum implementation of constant depth.
At each site, a reversible AND operation writes $q_i y_i$ into a register initialized in $\ket{0}$, and a CNOT copies the computational basis value of $y_i$ into the register for $b_i$.
The registers for $a_i$ and $t_i$ remain in $\ket{0}$.
These local operations require only a constant number of gates acting on one or two qubits and can be performed in parallel across sites.
Since the inputs are classical, the same rule can also be implemented using local classically selected operations, without entanglement as a computational resource.

The perfect strategy above is possible because the one-round circuit can adapt the Pauli frame to the CHSH questions.
In particular, the choice $s_i = q_i y_i$ gives $\sigma_{j,k}(\mathbf{s}) = y_k$, reducing the verification condition to $a_j \oplus b_k = y_k$, which is satisfied by a local response.
The two-round formulation rules out precisely this dependence: the outputs $(\mathbf{s}, \mathbf{t})$, and hence the Pauli frame, must be fixed before the independently sampled input $(\mathbf{x}, \mathbf{y})$ is supplied.
The frame bits therefore cannot be chosen as functions of $x_j$ or $y_k$, which allows the classical value of the fixed-frame PF-CHSH game to enter the lightcone argument.
Since the 1R1D-CHSH problem instead has a constant-depth classical strategy with success probability one, no quantum-classical depth separation based on the success probability can be obtained from its one-round formulation.

\subsection{Sufficient Data-Qubit Counts from Analytical Bounds} \label{sec:Qubit}

We compare sufficient qubit counts for the 2R1D-CHSH problem, the 1D Magic Square problem~\cite{Bravyi2020}, and the Magic Pentagram problem~\cite{Han2022}.
Throughout this comparison, the maximum classical gate fan-in is $F = 2$.
The qubit counts refer to the data registers of the ideal quantum circuits and exclude any additional resources for control, routing, or error correction.

The benchmark is an additive advantage of at least $0.01$ between the success probability of the specified quantum circuit and an upper bound for the competing classical circuits.
A size certified in this way is sufficient, but need not be the smallest size at which quantum advantage occurs.

For consistency, we apply the same averaging argument to all three classical estimates.
If the success probability conditioned on an event $E$ is at most $c$, and $q = \Pr[E^c]$, then
\begin{equation} \label{eq:comparison-averaging}
p_{\mathrm{succ}} \leq c(1-q)+q = c+(1-c)q.
\end{equation}
For the 1D Magic Square problem, Lemma~7 in Sec.~I.C.2 of the Supplementary Information of Ref.~\cite{Bravyi2020} gives $q \leq 80F^{2D}/N$, while Lemmas~3 and~8 give the conditional success bound $c = 8/9$.
These results are combined in the proof of Theorem~4 of that Supplementary Information.
For the Magic Pentagram problem, Proposition~13 and Lemma~14 in Appendix~E of Ref.~\cite{Han2022} give $q \leq 216F^{2D}/N$ and $c = 19/20$, respectively.
They are combined in the proof of Theorem~10 in Appendix~F of that reference.
We use $N$ for the size parameter $n$ in both references, and $F$ for their respective fan-in parameters $K$ and $B$.
For these two problems, $D$ denotes the depth of the entire competing classical circuit.

Let $p_{\mathrm{MS}}$ and $p_{\mathrm{MP}}$ denote the average classical success probabilities under the uniform distributions on the restricted instance sets used in these results.
Equation~(\ref{eq:comparison-averaging}) gives
\begin{equation} \label{eq:magic-square-refined-classical-bound}
p_{\mathrm{MS}} \leq \frac{8}{9}+\frac{80F^{2D}}{9N},
\end{equation}
and
\begin{equation} \label{eq:magic-pentagram-classical-bound}
p_{\mathrm{MP}} \leq \frac{19}{20}+\frac{216F^{2D}}{20N}.
\end{equation}
The final averaging steps in the cited proofs use the weaker estimate $p_{\mathrm{succ}} \leq c+q$.
The bounds above retain the factor $1-c$ without changing the exceptional events or the conditional success bounds.
Appendix~\ref{app:comparison-bounds} specifies these events and provides the derivation.

For the 2R1D-CHSH problem, we allow the second portion of the classical circuit to have depth $D = 8$, while the total operational depth of the quantum circuit is at most eight.
This comparison is conservative because the classical computation in the first round remains unrestricted.
Theorem~\ref{thm:classical-average-success-bound} then certifies an advantage of at least $0.01$ when
\begin{equation}
N \geq 1+\left\lceil \frac{2^8}{(\sqrt{2}-1)/4-0.01} \right\rceil = 2738.
\end{equation}
The corresponding quantum circuit uses $2N = 5476$ data qubits.

For the 1D Magic Square problem, the circuit in Fig.~2 of Ref.~\cite{Bravyi2020} uses $4N$ data qubits.
The discussion following Theorem~2 in Sec.~I.B of its Supplementary Information gives a depth-five implementation on the restricted input set using classically selected Clifford gates acting on at most two qubits.
This depth counts the unitary gates and excludes the final computational basis measurements.
Under the operational depth convention specified in Sec.~\ref{sec:Quantum}, the final measurement layer gives an operational depth bound of six.
We therefore set the competing classical depth to $D = 6$ in Eq.~(\ref{eq:magic-square-refined-classical-bound}).
Since the ideal quantum circuit succeeds with probability one, the benchmark condition becomes
\begin{equation} \label{eq:magic-square-qubit-benchmark}
N \geq \left\lceil \frac{80\cdot2^{12}}{9(1-0.01-8/9)} \right\rceil = 360\,088.
\end{equation}
The corresponding count is $4N = 1\,440\,352$ data qubits.

The Magic Pentagram construction uses $6N$ data qubits and also succeeds with probability one~\cite{Han2022}.
Based on the construction in that reference, we calculate an operational depth bound of nine on the restricted instance set used in this comparison.
The Bell states can be prepared in two layers.
The local basis changes require at most four layers, with the most demanding case consisting of three sequential two-qubit gates followed by one layer of Hadamard gates.
On the restricted instance set, the inverse basis changes within the entanglement swapping operations are identities, so these operations reduce to Bell basis changes on disjoint pairs and require two layers.
The final computational basis measurements contribute one additional layer, giving an operational depth bound of $2+4+2+1=9$.
We therefore set the competing classical depth to $D = 9$ in Eq.~(\ref{eq:magic-pentagram-classical-bound}).
The sufficient condition for an advantage of at least $0.01$ becomes
\begin{equation} \label{eq:magic-pentagram-qubit-benchmark}
N \geq \frac{216\cdot2^{18}}{20(1-0.01-19/20)} = 70\,778\,880.
\end{equation}
The corresponding count is $6N = 424\,673\,280$ data qubits.

\begin{table*}[t]
\caption{Sufficient qubit counts for an additive advantage of $0.01$ under the comparisons described in the text.
The maximum classical gate fan-in is $F = 2$.
The quantum depth bounds include the final measurement layer.
For Magic Square and Magic Pentagram, these depth bounds apply to the respective restricted instance sets used in the comparison.
For 2R1D-CHSH, $D = 8$ is allowed for the second portion of the classical circuit alone.
For Magic Square and Magic Pentagram, the benchmarks use $D = 6$ and $D = 9$, respectively, for the entire classical circuit.
The classical bounds for these two problems retain the factors $1-c$ in the averaging argument described in the text.
All qubit counts refer to ideal data registers.
}
\label{tab:qubit-comparison}
\begin{ruledtabular}
\begin{tabular}{lcccc}
Problem & Quantum depth & Data qubits & Classical upper bound & $0.01$ gap benchmark
\\
\hline
2R1D-CHSH
& $\leq8$ & $2N$ & $\displaystyle\frac{3}{4}+\frac{F^D}{N-1}$ & $N = 2738;\ 5476$ qubits \\
1D Magic Square~\cite{Bravyi2020}
& $\leq6$ & $4N$ & $\displaystyle\frac{8}{9}+\frac{80F^{2D}}{9N}$ & $N = 360\,088;\ 1\,440\,352$ qubits \\
Magic Pentagram~\cite{Han2022}
& $\leq9$ & $6N$ & $\displaystyle\frac{19}{20}+\frac{216F^{2D}}{20N}$ & $N = 70\,778\,880$; $424\,673\,280$ qubits
\end{tabular}
\end{ruledtabular}
\end{table*}

Table~\ref{tab:qubit-comparison} shows that these bounds certify the prescribed advantage with fewer data qubits for the 2R1D-CHSH construction than for either the stated Magic Square or Magic Pentagram benchmark.
The comparison nevertheless involves different relation problems, different numbers of rounds, and the gate implementations specified for each construction.
It is not a comparison of optimal circuit realizations or minimum experimental resources.

In particular, the present analysis assumes ideal operations.
Errors in the recorded outcomes of entanglement swapping can accumulate in the Pauli-frame parities even when the circuit depth is constant.
The quoted qubit counts therefore do not establish experimental feasibility or tolerance to a fixed local noise rate.
Such conclusions require a separate analysis of noise, control, measurement, and verification resources.

\section{Conclusion} \label{sec:Conclusion}

The 2R1D-CHSH problem exhibits an asymptotically tight difference between the depth requirements of quantum and classical circuits.
The quantum construction uses $2N$ qubits and maintains the CHSH quantum success probability at operational depth at most eight, independently of $N$.
In the classical model considered in this work, every fixed target success probability above $3/4$ requires logarithmic depth after the questions of the second round are supplied.
This lower bound is optimal in its dependence on $N$, since an explicit classical circuit of logarithmic depth solves the relation with certainty.
The advantage therefore concerns the depth needed to respond to the newly supplied questions with the prescribed success probability, rather than the absence of a classical solution to the relation.

Related constructions based on the Magic Square and Magic Pentagram games use quantum strategies that succeed with probability one~\cite{Bravyi2020,Han2022}.
By contrast, our circuit attains the optimal quantum value of CHSH, $(2+\sqrt2)/4<1$, while still yielding an unconditional separation in circuit depth.
Thus, quantum pseudo-telepathy~\cite{Brassard2005} is not necessary for this form of advantage.
The two rounds fix the Pauli-frame data before the questions are supplied, and the lightcone bound converts the gap between the classical and quantum CHSH values into a lower bound on classical circuit depth.

The separation also holds under an asymmetric comparison that favors the classical model.
The quantum circuit is restricted to adjacent gates in one dimension, whereas the classical bound imposes no geometric locality restriction and permits arbitrary circuit size and fan-out within the model of Sec.~\ref{sec:Classical}.
Only the depth and gate fan-in of the second portion enter the bound; the computation in the first round is unrestricted.
Moreover, the correction term $F^D/(N-1)$ makes the dependence on fan-in, depth, problem size, and success probability explicit.
These features identify limited information propagation as the obstruction faced by classical circuits, while one Bell pair per site enables the quantum circuit to produce the required correlations in constant depth.

Complementary separations for different relation or interactive problems have been established against $\mathrm{AC}^{0}$ and, for every fixed prime $p$, $\mathrm{AC}^{0}[p]$~\cite{Watts2019,Grier2020}.
The latter class extends $\mathrm{AC}^{0}$ by allowing $\operatorname{MOD}_p$ gates of unbounded fan-in.
Such a gate outputs $1$ exactly when the number of input bits equal to $1$ is divisible by $p$.

Several questions remain open.
The most immediate is whether a version of the 2R1D-CHSH problem that tolerates noise can preserve an advantage over $3/4$ independent of $N$ under a constant local noise rate, while retaining geometric locality in one dimension and linear qubit overhead.
Previous constructions tolerate noise through the rigidity of the Magic Square game or within interactive settings based on shallow Clifford circuits~\cite{Bravyi2020,Grier2021}.
It remains unclear whether the quantum value of CHSH, which is strictly smaller than one, permits an analogous construction.
The perfect $\mathrm{AC}^{0}$ strategy described in Sec.~\ref{sec:DepthSeparation} also shows that a separation against classical circuits with unbounded fan-in would require a different relation or additional restrictions on the computational model.

Another question is whether a different encoding can yield a CHSH construction with a single round, despite the obstruction identified in Sec.~\ref{sec:Why}.
More broadly, the theory of nonlocal games provides many examples in which entanglement produces a strict, but not necessarily perfect, advantage over classical strategies~\cite{Cleve2004,Brunner2014}.
One may therefore ask which nonlocal games with quantum value strictly smaller than one admit a construction of the type developed in this work, with constant circuit depth and a verification condition that accounts for the Pauli frame.
Such a characterization could lead to a general framework in which the separation in circuit depth is controlled directly by the gap between the classical and quantum values of the underlying game.

\section*{Data Availability}
The analytical results and the information required to reproduce the resource estimates are provided in this article.

\section*{AI Statement}

The authors used ChatGPT (GPT-5.6 Sol) for language revision and discussions of proofs and resource estimates, directing the interactions and critically evaluating and revising selected suggestions before incorporating them into the manuscript.
All mathematical arguments and calculations incorporated from these interactions were independently verified by the authors.
The authors take full responsibility for the accuracy and integrity of the final manuscript.

\begin{acknowledgments}
This research was supported by Basic Science Research Program through the National Research Foundation of Korea (NRF) funded by the Ministry of Education (Grant No. RS-2023-00243988).
S.L. acknowledges support from the NRF grants funded by the Ministry of Science and ICT (MSIT) (No. RS-2024-00432214 and No. RS-2022-NR068791) and Creation of the Quantum Information Science R\&D Ecosystem (No. RS-2023-NR068116) through the NRF funded by the MSIT as well as the Institute of Information \& Communications Technology Planning \& Evaluation (IITP) grant funded by the MSIT (No. RS-2025-02304540).
\end{acknowledgments}

\appendix

\section{CHSH Values and Pauli-Frame Correlations} \label{app:CHSH-details}

We provide the derivations of the game values and the conditional winning probability stated in Sec.~\ref{sec:CHSHg}.

\subsection{Classical values and output relabeling}
A deterministic classical CHSH strategy is specified by four bits $a_0, a_1, b_0, b_1$. A strategy that wins on all
question pairs would satisfy
\begin{align}
a_0 \oplus b_0& = 0, & a_0 \oplus b_1& = 0, \\
a_1 \oplus b_0& = 0, & a_1 \oplus b_1& = 1.
\end{align}
The XOR of these equations gives $0 = 1$.
Thus, every deterministic strategy loses on at least one of the four equally likely question pairs.
Shared randomness cannot improve the average because it gives a convex combination of deterministic strategies.
The strategy $a = b = 0$ attains $3/4$, which proves the classical value in Eq.~(\ref{eq:CHSH-values})~\cite{CHSH1969, Cleve2004}.

For a fixed Pauli frame, the output relabeling
\begin{equation}
a' = a \oplus (1-x)\sigma \oplus x\tau
\end{equation}
is its own inverse and depends only on Alice's question and the frame.
It therefore gives a bijection between strategies for CHSH and PF-CHSH without changing their winning probabilities.
This holds for both classical and quantum strategies and proves the equalities of the game values in Eq.~(\ref{eq:PF-CHSH-values}).
In particular, the classical strategy $a = (1-x)\sigma \oplus x\tau$, $b = 0$ attains $3/4$.

\subsection{Quantum correlations}
In the computational basis, the Pauli matrices are
\begin{equation}
Z = \begin{pmatrix}1&0\\0&-1\end{pmatrix}, \quad
X = \begin{pmatrix}0&1\\1&0\end{pmatrix}.
\end{equation}
The Bell state $\ket{\Phi^+}$ obeys
\begin{align}
\bra{\Phi^+}Z \otimes Z\ket{\Phi^+} & = \bra{\Phi^+}X \otimes X\ket{\Phi^+} = 1, \\
\bra{\Phi^+}Z \otimes X\ket{\Phi^+} & = \bra{\Phi^+}X \otimes Z\ket{\Phi^+} = 0.
\end{align}
Consequently, the observables in Eqs.~(\ref{eq:CHSH-Alice-observables}) and~(\ref{eq:CHSH-Bob-observables}) give
\begin{equation}
\bra{\Phi^+}\mathcal{A}_x \otimes \mathcal{B}_y\ket{\Phi^+} = \frac{(-1)^{xy}}{\sqrt{2}}.
\end{equation}

Let $P_{\sigma,\tau} = Z^\tau X^\sigma$.
The Pauli commutation relations imply
\begin{align}
P_{\sigma,\tau}^{\dagger}ZP_{\sigma,\tau} & = (-1)^\sigma Z, \\
P_{\sigma,\tau}^{\dagger}XP_{\sigma,\tau} & = (-1)^\tau X.
\end{align}
Thus, the correlations of the framed Bell state are
\begin{align}
E_{xy}(\sigma,\tau)
&: = \bra{\Phi_{\sigma,\tau}} \mathcal{A}_x \otimes \mathcal{B}_y \ket{\Phi_{\sigma,\tau}} \\
& = \frac{(-1)^{xy \oplus (1-x)\sigma \oplus x\tau}}{\sqrt{2}}.
\end{align}

The product of the two measurement eigenvalues is $(-1)^{a \oplus b}$.
For any required parity $r\in\{0,1\}$, the corresponding probability is therefore
\begin{equation}
\Pr[a \oplus b = r| x,y] = \frac{1+(-1)^rE_{xy}(\sigma,\tau)}{2}.
\end{equation}
For $r = xy \oplus (1-x)\sigma \oplus x\tau$, this expression equals $(2+\sqrt{2})/4$, proving Eq.~(\ref{eq:PF-CHSH-pointwise-success}) for every frame and question pair.
The case $\sigma = \tau = 0$ gives the same success probability for CHSH.
Tsirelson's bound shows that no quantum CHSH strategy can exceed this average winning probability~\cite{Tsirelson1980}.
The output relabeling above establishes the corresponding optimality for PF-CHSH.

\section{Classical Bounds Used in the Qubit Comparison} \label{app:comparison-bounds}

We derive the estimates used in Sec.~\ref{sec:Qubit} from the lightcone results in Refs.~\cite{Bravyi2020, Han2022}.
We use $N$ for the size parameter $n$ in both references, and $F$ for their fan-in parameters $K$ and $B$, respectively.
The depth $D$ in this appendix is the depth of the entire classical circuit for the Magic Square or Magic Pentagram problem.
We first fix all internal random bits of the circuit.
The estimates below hold for every such choice and therefore extend to probabilistic circuits by averaging.

\subsection{Input distributions and lightcone events}

For the Magic Square problem, the input is uniform on the set $S$ defined by Eqs.~(7) and~(8) in the Supplementary Information of Ref.~\cite{Bravyi2020}.
The selected positions satisfy $1\leq j<k\leq N$, the active question blocks $\alpha_j$ and $\beta_k$ are independent and uniform on $\{01,10,11\}$, and all other question blocks are $00$.
For the Magic Pentagram problem, the input is uniform on the set $S$ defined in Sec.~4 of Ref.~\cite{Han2022}.
The active question blocks $x_j$ and $y_k$ range independently over $\{000,001,010,011,100\}$, and all other question blocks are $111$.
The notation $j,k$ replaces the positions $k,l$ used in that reference.
In both cases, the selected pair is uniform over all $j<k$.

Let $r=2$ for the Magic Square problem and $r=3$ for the Magic Pentagram problem.
Write $I_j, J_k$ for the sets of bits in the selected question blocks and $O_j,P_k$ for those in the corresponding answer blocks.
For Magic Square, these blocks are $\alpha_j,\beta_k,x_j,y_k$, respectively; for Magic Pentagram, they are $x_j, y_k, z_j, w_k$.
Each block contains $r$ raw circuit bits, and the total number of output bits is $2rN$.
The additional answer components determined by the game parity constraints are not separate circuit outputs.

Following the cited proofs, let $\Gamma(I)$ denote the forward lightcone defined by functional dependence: an output bit belongs to $\Gamma(I)$ if its value can change when a bit in $I$ is flipped with all other input bits fixed.
For either problem, define the event $E$ by
\begin{align} \label{eq:comparison-lightcone-event}
\Gamma(I_j)\cap\Gamma(J_k) & = \emptyset, \\
P_k\cap\Gamma(I_j) & = \emptyset, \\
O_j\cap\Gamma(J_k) & = \emptyset.
\end{align}
This is the event $E_{\mathcal C}$ in Lemma~7 of the Supplementary Information of Ref.~\cite{Bravyi2020}, and the event $E$ preceding Proposition~13 in Appendix~E of Ref.~\cite{Han2022}.
We denote the two events by $E_{\mathrm{MS}}$ and $E_{\mathrm{MP}}$, respectively.
Their definitions depend on the circuit and the selected positions, but not on the active question values.

\subsection{Exceptional probabilities}

The backward lightcone of each output bit has size at most $F^D$.
For a uniform choice of $j<k$, any prescribed pair of input bits belongs to $I_j$ and $J_k$ with probability at most $2/[N(N-1)]$.
Lemma~6(ii) in the Supplementary Information of Ref.~\cite{Bravyi2020}, with $2rN$ output bits, therefore gives
\begin{equation}
\Pr[\Gamma(I_j)\cap\Gamma(J_k)\ne\emptyset] \leq \frac{4rF^{2D}}{N-1}.
\end{equation}

For fixed $j$ and an independent uniform choice of $k$ from $\{1,\ldots,N\}$, Lemma~6(i) of the same reference gives
\begin{equation}
\Pr[O_j\cap\Gamma(J_k)\ne\emptyset] \leq \frac{r2^rF^D}{N}.
\end{equation}
The same estimate holds after averaging over an independent uniform $j$.
The restriction $j<k$ increases this upper bound by at most $N^2/\binom{N}{2}\leq4$.
The same reasoning applies with the two parties interchanged.
A union bound consequently gives
\begin{align}
\Pr[E^c] & \leq \frac{4rF^{2D}}{N-1}+\frac{8r2^rF^D}{N} \\
& \leq \frac{8r(1+2^r)F^{2D}}{N},
\end{align}
where the last inequality uses $N\geq2$ and $F^D\leq F^{2D}$.
For $r=2$, the coefficient is $80$, as in Lemma~7 of the Supplementary Information of Ref.~\cite{Bravyi2020}.
For $r=3$, the same counting argument gives the coefficient $216$ stated in Proposition~13 of Ref.~\cite{Han2022}.
Thus, the exceptional probabilities satisfy
\begin{align} 
q_{\mathrm{MS}} := \Pr[E_{\mathrm{MS}}^c] & \leq \frac{80F^{2D}}{N}, \label{eq:comparison-exceptional-probabilities1}\\
q_{\mathrm{MP}} := \Pr[E_{\mathrm{MP}}^c] & \leq \frac{216F^{2D}}{N}.\label{eq:comparison-exceptional-probabilities2}
\end{align}

\subsection{Conditional success and averaging}

For Magic Square, Lemma~3 in the Supplementary Information of Ref.~\cite{Bravyi2020} gives a necessary condition for a valid circuit output, and Lemma~8 bounds the probability of this condition by $8/9$ for every selected pair in $E_{\mathrm{MS}}$.
The frame parameters need not be independent of the questions.
Instead, the proof of Lemma~8 uses the disjoint lightcones to factor each parameter into a function of Alice's question and a function of Bob's question, and then applies local output relabeling.
For Magic Pentagram, Lemma~14 in Appendix~E of Ref.~\cite{Han2022} states the conditional success bound $19/20$ for $E_{\mathrm{MP}}$ under the input distribution on $S$ specified above.

Let $G\in\{\mathrm{MS},\mathrm{MP}\}$, and let $W_G$ be the event that the classical circuit produces a valid output for problem $G$.
Set $c_{\mathrm{MS}}=8/9$ and $c_{\mathrm{MP}}=19/20$.
The cited conditional bounds imply
\begin{equation}
\Pr[W_G\cap E_G] \leq c_G\Pr[E_G].
\end{equation}
This inequality also holds when $\Pr[E_G]=0$.
Since the success probability on $E_G^c$ is at most one,
\begin{align}
p_G & = \Pr[W_G\cap E_G]+\Pr[W_G\cap E_G^c] \\
& \leq c_G(1-q_G)+q_G \\
& = c_G+(1-c_G)q_G.
\end{align}
Equations~(\ref{eq:comparison-exceptional-probabilities1}) and (\ref{eq:comparison-exceptional-probabilities2}) now yield Eqs.~(\ref{eq:magic-square-refined-classical-bound}) and~(\ref{eq:magic-pentagram-classical-bound}).

The final proof of Theorem~4 in the Supplementary Information of Ref.~\cite{Bravyi2020} and the proof of Theorem~10 in Appendix~F of Ref.~\cite{Han2022} use the weaker estimate $p_G\leq c_G+q_G$.
The bounds used in Sec.~\ref{sec:Qubit} retain the coefficient $1-c_G$ without changing the exceptional events, their probability estimates, or the cited conditional success bounds.

\bibliography{CHSH}

\end{document}